\newtheorem{assumption}{Assumption}
\newtheorem{defi}{Definition}
\newtheorem{rmk}{Remark}
\newtheorem{thm}{Theorem}
\newtheorem{lem}{Lemma}
\journal{ ISA Transactions }
\begin{document}

\begin{frontmatter}

\title{A fully distributed event-triggered communication strategy for second-order multi-agent systems consensus}


\author[mymainaddress]{Tao Li}

\author[mymainaddress]{Quan Qiu\corref{mycorrespondingauthor}}
\cortext[mycorrespondingauthor]{Corresponding author}

\author[SecAddress]{Chunjiang Zhao\corref{mycorrespondingauthor}}

\address[mymainaddress]{Beijing Research Center of Intelligent Equipment for Agriculture, Beijing, China}
\address[SecAddress]{National Engineering Research Center for Information Technology in Agriculture, Beijing, China}

\begin{abstract}
	This paper investigates the communication strategy for second-order multi-agent systems with nonlinear dynamics. To save the scarce resources of communication channels, a novel event-triggered communication mechanism is designed without using continuous signals among the followers. To get rid of the centralized information depending on the spectrum of the Laplacian matrix, a consensus protocol with updated coupling gains and an event-triggered strategy of interagent communication with updated thresholds are presented. 
	To minimize the impacts of noise, only relative positions among agents are employed in the protocol.
	With the proposed event-triggered mechanism and the control protocol, the leader-following consensus of MASs and no Zeno behavior included are mathematically proven. The results are verified through numerical examples. 
\end{abstract}
\begin{keyword}
 Event-trigger, multi-agent systems, leader-following consensus, nonlinear dynamics
\end{keyword}

\end{frontmatter}


\section{Introduction}

Over the last decade, ever-increasing research trends concentrate on the studies of multi-agent systems (MASs). A fundamental problem of MASs is designing a networking protocol of consensus, which means that agents converge to a common point or state value. Consensus has been extensively investigated in literature \cite{shang2020resilient,huang2020distributed}. Reviewing the existing studies of this area, one may note that the effect of network-induced communicating constraints over the consensus control performance attracts extensive attention, such as the problems of communication delays \cite{wang2020distributed}, switching topology \cite{Xie2015Event}, discrete interagent information exchange \cite{zhang2020semi}. 
A main source of communication constraints\cite{yan2020performance} is the scarce network bandwidth. Generally speaking, the communication channels of MASs are usually multipurpose and various kinds of interagent information shares the common channels. To achieve desired timeliness, with limited bandwidth, reducing the burden of communication is expected.

It is well-known that the continuous signals among agents usually requires more communication bandwidth than discrete signals. In light of this, the sample-based communicating mechanism is proposed to take samples of exchange of information among agents. Under the control of a well-designed sample-based communicating mechanism, consensus of MASs can be guaranteed with less channel occupancy. In the sample-based mechanisms, there are two kinds which are event-triggering mechanism (ETM) and time-trigger mechanism (TTM). 
TTM takes the samples according to time, which is widely used in many existing sample-based network protocols \cite{Xian2015Event,Zhang2016Survey}. But one potential drawback is that the time-driven sampling, which is independent of system states, feedbacks, communication resources et al., may result in unnecessary and redundant sampled-data\cite{8959085, Hetel2017Recent}. To optimize the strategies of sampling, ETM is proposed, where the sampling is triggered by the predefined conditions. The conditions depend on the system state, the feedback signals or other artificially designed conditions. 

ETM can be dated back to the late 50s in the 20th-century \cite{Ellis1959Extension} and then event-trigger mechanism has been extensively developed during the past decades in control community. One of pioneering work in the research of MAS consensus with ETM is presented by \cite{Tabuada2007Event}, where the control actuation is triggered whenever the errors become large enough w.r.t. the norm of the state. The proposed ETM guarantees the performance of consensus and relaxes the requirements of periodic execution. 
Following \cite{Tabuada2007Event}, some publications also show that ETMs are suitable for a class of first-order MASs \cite{Dimarogonas2012Distributed}, for the double-integrator MASs \cite{Seyboth2013Event}, for a class of linear time-invariant MASs \cite{Guinaldo2012Distributed}.
By ETM, the above literature focuses on reducing updates of the controllers in agents but still requires the continuous interagent exchange. Aiming at alleviating the burden of network channels, \cite{Fan2013Technical} presents an event-triggered algorithm to get rid of the continuous measurements of neighbors' state. Additionally, \cite{Zhu2014Event}  proposes two event-triggered condition functions: one for reducing control updates, the other for avoiding continuous communication among agents. Considering the second-order leader-following MASs with nonlinear dynamic behaviors,   \cite{Li2015Event,Zhao2017Event} propose distributed event-triggered sampling control approaches, where the agents only broadcast their discrete state values and the local controllers are only updated their outputs when the triggering conditions are satisfied.   
\cite{Guo2014A} uses an event-triggered framework which is free of continuous measurement of the triggered condition and gives the sufficient conditions on the consensus of MASs with the linear agents. 
There is a common flaw in the aforementioned work. Centralized information depending on the spectrum of the Laplacian matrix is required a prior in the course of designing ETM and the control protocol. See Remark 2 in \cite{Zhu2014Event}, Eq. (11) in \cite{Zhao2017Event}, Theorem 2 in \cite{Guo2014A}, etc., just to name a few.
It means the ETMs and control protocols are based on the assumption that each agent knows the overall communication topology. Relaxing such a centralized assumption therefore is worthwhile to be investigated to take full advantage of the power of distributed protocols. 
Fortunately, in the studies of fully distributed consensus protocol in MASs, there has been much progress \cite{Yu2011Distributed,Zhongkui2013Distributed,Zhongkui2015Designing, yan2019event,Yuezu2017}. Depending on only local information of each agent, some distributed adaptive consensus protocols are presented and applied to undirected communication graphs \cite{Yu2011Distributed,Zhongkui2013Distributed, goebel2020unifying} and directed communication graphs \cite{Zhongkui2015Designing, Yuezu2017, fu2019robust}. However, these protocols require the continuously local states and interacting states, which does not suit the case that resources of communication and computation are limited. In particular, when combining with ETMs, it is a promising topic to obtain less conservation in terms of reducing the frequency of interagent exchanges.

To the best of the authors’ knowledge, however, the issue is still not appropriately addressed in literature. Designing a novel distributed ETM consensus protocol 
naturally becomes a motivation of the present paper. In this paper, the consensus protocol is expected to be free of centralized information depending on the spectra of the Laplacian matrix. Moreover, agents in the considered MAS are with nonlinear dynamics under the connected undirected graph. The primary contributions are summarized as follows.
a) To overcome the challenging problem that information depending on the spectra of Laplacian matrices is required a prior to select parameters of event-triggered functions, a novel event-triggered sampled-data mechanism with an adaptive threshold is first proposed 
b) The fully distributed consensus protocol for second-order MASs with nonlinear dynamics is designed, which is based on event-triggered sampled-data interacting information among agents. 
c) Only the relative discrete position information is employed in both the event-triggered rule and the consensus protocol, which results in that the undesired velocity measurements can be avoided.

Throughout this paper, $\mathbb{R}^n$ and $\mathbb{R}^{n\times n}$ denote the $n-$Euclidean space and the set of all $n\times n$ real matrices, respectively; $\|\cdot\|$ stands for either the Euclidean vector norm or the spectral norm of a matrix; $\otimes$ denotes the Kronecker product; $I_n$ represents an $n \times n$ identity matrix; $\lambda_{min}(\cdot)$ and $\lambda_{max}$ denote the minimum and maximum eigenvalue of a matrix; $diag\{d_1,\ldots,d_n\}$ denotes the diagonal matrix with the elements $d_1,\ldots,d_n$ on the diagonal.

\section{Preliminaries}\label{2secPre}

The following lemmas are necessary for the analysis of this paper. 
\subsection{Some supporting lemmas}

\begin{lem}\label{lem3}
	Let $\omega: R \rightarrow R$ be a uniformly continuous function on $[0,\infty \} $. Suppose that  $lim_{t \rightarrow \infty} \int_{0}^{t} \omega (\tau) d \tau$ exists and  is finite. Then,
	$$
	\omega(t) \rightarrow 0~~ as ~~t \rightarrow \infty.
	$$
\end{lem}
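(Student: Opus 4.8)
The plan is to argue by contradiction, exploiting the fact that uniform continuity prevents $\omega$ from having tall, narrow spikes; this is the classical Barbalat-type argument. Suppose the conclusion fails, so that $\omega(t)\not\to 0$ as $t\to\infty$. Then there exist some $\varepsilon>0$ and an increasing sequence $t_n\to\infty$ with $|\omega(t_n)|\ge\varepsilon$ for every $n$.

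First I would invoke the uniform continuity of $\omega$ to fix a single $\delta>0$, \emph{independent of $n$}, such that $|\omega(t)-\omega(s)|<\varepsilon/2$ whenever $|t-s|<\delta$. On each interval $[t_n,t_n+\delta]$ this forces $|\omega(\tau)|\ge|\omega(t_n)|-\varepsilon/2\ge\varepsilon/2$, and moreover keeps the sign of $\omega$ constant on that interval, since $\omega$ cannot cross zero there without varying by at least $\varepsilon$. Consequently
$$
\left|\int_{t_n}^{t_n+\delta}\omega(\tau)\,d\tau\right|=\int_{t_n}^{t_n+\delta}|\omega(\tau)|\,d\tau\ge\frac{\varepsilon\delta}{2},
$$
a lower bound that is uniform in $n$.

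Next I would bring in the hypothesis that $\int_0^t\omega(\tau)\,d\tau$ converges to a finite limit $L$. By the Cauchy criterion for this limit, the tail increments $\int_a^b\omega(\tau)\,d\tau$ become arbitrarily small once $a$ is large enough. After passing to a subsequence if necessary so that the intervals $[t_n,t_n+\delta]$ are disjoint with $t_n\to\infty$, this gives $\int_{t_n}^{t_n+\delta}\omega(\tau)\,d\tau\to 0$, which directly contradicts the uniform lower bound $\varepsilon\delta/2$ established above. The contradiction completes the proof.

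The step I expect to require the most care is the one that genuinely uses uniform continuity rather than mere continuity: extracting a single $\delta$ valid for all the spikes at the $t_n$ simultaneously, and using it both to bound $|\omega|$ from below and to pin down the sign of $\omega$ on each interval. Without uniformity one could only obtain interval lengths shrinking with $n$, whereupon the integral lower bound would collapse to zero; this is precisely the crux where the hypothesis is indispensable.
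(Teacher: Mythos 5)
Your proof is correct and complete: it is the canonical contradiction argument for Barbalat's lemma, using uniform continuity to get a single $\delta$ that prevents tall narrow spikes, sign constancy on $[t_n,t_n+\delta]$ to turn $\left|\int_{t_n}^{t_n+\delta}\omega\right|$ into $\int_{t_n}^{t_n+\delta}|\omega|$, and the Cauchy criterion for the convergent integral to force the contradiction. The paper states Lemma \ref{lem3} as a known supporting result and gives no proof at all, so there is no in-paper argument to diverge from; yours is the standard one. The only microscopic point worth tightening: for $\tau\in[t_n,t_n+\delta]$ the endpoint gives $|\tau-t_n|\le\delta$ rather than $<\delta$, so either integrate over $[t_n,t_n+\delta/2]$ (yielding the lower bound $\varepsilon\delta/4$, which works just as well) or choose $\delta$ from a non-strict modulus of continuity; everything else, including your optional passage to disjoint intervals, goes through unchanged.
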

\begin{lem}\label{lem-SCHUR}
	The following linear symmetric matrix inequality (LMI)
	\begin{equation*}
	S=S^T=
	\begin{pmatrix}
	A&B\\ *&C
	\end{pmatrix}<0,
	\end{equation*}
	is equivalent to one of the following conditions: 
	\begin{enumerate}
		\item  $S<0$;
		\item  $A<0,C-B^TA^{-1}B<0$;
		\item  $C<0,A-BC^{-1}B^T<0$.
	\end{enumerate}
\end{lem}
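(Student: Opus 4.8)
The plan is to reduce the whole statement to a single algebraic device, block diagonalization by a congruence transformation, combined with Sylvester's law of inertia, which guarantees that congruence preserves the sign (indeed the full inertia) of a symmetric matrix. Since $S=S^{T}$, the omitted lower-left block is $B^{T}$, so that $S=\begin{pmatrix}A&B\\B^{T}&C\end{pmatrix}$. Condition (1) is merely a restatement of the hypothesis $S<0$, so the substance lies in establishing $S<0\iff$ (2) and $S<0\iff$ (3).

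First I would treat the equivalence with (2). For the forward direction, restricting the quadratic form $\xi^{T}S\xi$ to vectors of the form $\xi=(x^{T},0)^{T}$ yields $x^{T}Ax<0$ for all $x\neq 0$, hence $A<0$ and in particular $A$ is invertible. This invertibility is exactly what licenses the nonsingular (unit lower-triangular block) congruence matrix
$$T=\begin{pmatrix}I&0\\-B^{T}A^{-1}&I\end{pmatrix}.$$
A direct multiplication, using $A^{-1}=(A^{-1})^{T}$ when forming $T^{T}$, gives
$$TST^{T}=\begin{pmatrix}A&0\\0&C-B^{T}A^{-1}B\end{pmatrix}.$$
By Sylvester's law of inertia $S$ and $TST^{T}$ share the same inertia, so $S<0$ if and only if this block-diagonal matrix is negative definite, i.e. $A<0$ together with $C-B^{T}A^{-1}B<0$. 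The converse is immediate: assuming (2), the invertibility of $A$ makes the block-diagonal matrix well defined and negative definite, and applying the inverse congruence $T^{-1}(\cdot)T^{-T}$ returns $S<0$.

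The equivalence with (3) follows by the mirror-image argument, now clearing the off-diagonal blocks from the $C$ side. Restricting $\xi^{T}S\xi$ to $\xi=(0,y^{T})^{T}$ forces $C<0$, hence $C$ is invertible, and congruence with
$$\tilde{T}=\begin{pmatrix}I&-BC^{-1}\\0&I\end{pmatrix}$$
produces $\tilde{T}S\tilde{T}^{T}=\begin{pmatrix}A-BC^{-1}B^{T}&0\\0&C\end{pmatrix}$, whence $S<0\iff C<0$ and $A-BC^{-1}B^{T}<0$. I expect the only genuinely delicate point to be the logical ordering within each direction: the congruence cannot be invoked until the relevant diagonal block is known to be invertible, so one must first extract $A<0$ (respectively $C<0$) by testing $S$ on a coordinate subspace before ever writing $A^{-1}$ (respectively $C^{-1}$). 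Everything beyond that is the mechanical verification of the two block products above.
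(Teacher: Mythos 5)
Your proof is correct and complete. Note, however, that the paper does not prove this lemma at all: it is stated as a standard tool (the well-known Schur complement lemma) with no argument attached, so there is no in-paper proof to compare against; your congruence argument is the canonical one and supplies exactly what the paper omits. Two small remarks. First, you invoke Sylvester's law of inertia, which is more than you need: since $T$ is nonsingular, the identity $\xi^{T}(TST^{T})\xi=(T^{T}\xi)^{T}S\,(T^{T}\xi)$ together with the fact that $T^{T}$ is a bijection of $\mathbb{R}^{n}$ already gives $S<0\iff TST^{T}<0$ directly, with no appeal to inertia. Second, you correctly identified and handled the one genuinely delicate point, namely that $A<0$ (respectively $C<0$) must be extracted by restricting the quadratic form to a coordinate subspace \emph{before} the inverse $A^{-1}$ (respectively $C^{-1}$) may be written; many textbook sketches gloss over this ordering. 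The block computations $TST^{T}=\mathrm{diag}\left(A,\;C-B^{T}A^{-1}B\right)$ and $\tilde{T}S\tilde{T}^{T}=\mathrm{diag}\left(A-BC^{-1}B^{T},\;C\right)$ both check out, using the symmetry of $A$ and $C$ when transposing. The only cosmetic oddity is in the lemma statement itself, not your proof: condition (1) is literally the hypothesis $S<0$, so the real content is the pair of equivalences $S<0\iff(2)$ and $S<0\iff(3)$, which is precisely what you established.
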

\begin{lem}\label{lem-BABA} \cite{Slotine2004Applied}
	For the function $V(x,t)$, $V(x,t) \rightarrow 0$ as $t\rightarrow \infty$ holds, when the following conditions is met.
	\begin{enumerate}
		\item $V(x,t)$ has lower bound;
		\item $\dot V(x,t)$ is negative semi-definite;
		\item $\dot V(x,t)$ is a uniformly continuous function w.r.t. time, in other word, $\ddot V(x,t)$ has bound. 
	\end{enumerate}
\end{lem}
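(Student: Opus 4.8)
The plan is to reduce the claim to the interplay between two elementary facts: the convergence of a bounded, non-increasing real function, and Barbalat's lemma (Lemma~\ref{lem3}) applied to the derivative $\dot V$. First I would read condition 2 as asserting $\dot V(x,t)\le 0$ along every trajectory, so that the scalar map $t\mapsto V(x(t),t)$ is non-increasing; condition 1 then supplies a finite lower bound for this map. A non-increasing function that is bounded below converges to a finite limit, so $\lim_{t\to\infty}V(x(t),t)=V_\infty$ exists and is finite. This monotone-convergence step is the structural backbone of the argument, and it is where conditions 1 and 2 are consumed.

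Next I would convert the convergence of $V$ into a statement about $\dot V$. Since $V(x(t),t)=V(x(0),0)+\int_0^t \dot V(x(\tau),\tau)\,d\tau$, the existence of the finite limit $V_\infty$ is exactly the statement that $\lim_{t\to\infty}\int_0^t \dot V\,d\tau$ exists and is finite. Condition 3 guarantees that $\dot V$ is uniformly continuous (equivalently, $\ddot V$ is bounded), so the hypotheses of Lemma~\ref{lem3} are satisfied with $\omega=\dot V$. Invoking that lemma yields $\dot V(x,t)\to 0$ as $t\to\infty$. At this stage the two robust conclusions, namely $V\to V_\infty$ and $\dot V\to 0$, are fully established from the three listed hypotheses.

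The hard part, and the genuine obstacle, is the last mile from $V\to V_\infty$ to the stated conclusion $V(x,t)\to 0$. Monotone convergence only pins the limit to the infimum of $V$ along the trajectory, not to zero: for instance $V(t)=1+e^{-t}$ satisfies all three conditions yet tends to $1$, not $0$. To close this gap I would have to bring in information not contained in conditions 1--3, typically that the relevant lower bound is exactly $0$ together with positive definiteness of $V$ in the error coordinates, and then an invariance-type (LaSalle) argument showing that the only behaviour consistent with $\dot V\to 0$ forces $V_\infty=0$. I therefore expect the proof to split into a clean, self-contained part (producing $\dot V\to0$ and the convergence of $V$ to a finite value) and a second part that must either normalize the lower bound to zero or append a coercivity hypothesis. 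Flagging this split is essential, since as literally worded the three hypotheses deliver $\dot V\to 0$ rather than $V\to 0$, and the remaining identification of the limit with zero is precisely the point that carries the burden of the proof.
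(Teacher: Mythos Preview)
The paper does not supply its own proof of this lemma: it is quoted from \cite{Slotine2004Applied} as a preliminary result, so there is no in-paper argument to compare against. Your reconstruction is the standard one and is correct up to and including the conclusion $\dot V(x,t)\to 0$, obtained exactly as you describe by combining monotone convergence of $V$ with Lemma~\ref{lem3} applied to $\omega=\dot V$.

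Your diagnosis of the ``last mile'' is also correct, and it is not a gap in your reasoning but a misstatement in the lemma as printed. From conditions (1)--(3) alone one cannot conclude $V\to 0$; your counterexample $V(t)=1+e^{-t}$ settles this. The intended conclusion is $\dot V(x,t)\to 0$, which is precisely Barbalat's corollary for Lyapunov-like functions as stated in \cite{Slotine2004Applied}. This is also how the paper actually \emph{uses} the lemma in the proof of Theorem~\ref{thm1}: it verifies the three conditions and then infers that the negative semi-definite bound on $\dot V$ (the quadratic form with matrix $\Pi<0$) is driven to zero, hence $\tilde z(t)\to 0$. So you should read the lemma with $\dot V\to 0$ in the conclusion; with that correction your proof is complete and matches the classical argument.
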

\begin{lem}\label{LEM - L}
	The Laplacian matrix $L$ of an undirected graph $\mathcal{G}$ is semi-positive definite, which has a simple zero eigenvalue and all the other eigenvalues are positive if and only if the undirected graph $\mathcal{G}$ is connected.
\end{lem}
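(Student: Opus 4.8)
The plan is to establish both directions through the spectral theory of the real symmetric matrix $L$ together with an analysis of its null space via the associated quadratic form. First I would write $L = D - A$, where $D$ is the diagonal degree matrix and $A$ the symmetric adjacency matrix of $\mathcal{G}$ with nonnegative weights $a_{ij}$; since $\mathcal{G}$ is undirected, $L$ is real and symmetric, hence orthogonally diagonalizable with real eigenvalues. The cornerstone of the argument is the identity
\[
x^T L x = \frac{1}{2}\sum_{(i,j)\in \mathcal{E}} a_{ij}(x_i - x_j)^2 \geq 0 \quad \text{for all } x\in\mathbb{R}^n,
\]
obtained by expanding the double sum and regrouping the diagonal degree terms against the off-diagonal adjacency terms. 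This immediately yields that $L$ is positive semidefinite, so every eigenvalue is nonnegative; moreover the all-ones vector $\mathbf{1}$ lies in the kernel because each row of $L$ sums to zero, so $0$ is always an eigenvalue regardless of connectivity.

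The heart of the proof is to characterize $\ker L$. Because $L$ is positive semidefinite it admits a factorization $L = B^T B$ (the oriented incidence form), whence $x^T L x = \|Bx\|^2$ and therefore $Lx = 0$ is equivalent to $x^T L x = 0$. By the displayed identity this forces $a_{ij}(x_i - x_j)^2 = 0$, i.e. $x_i = x_j$ for every edge $(i,j)\in\mathcal{E}$, so any kernel vector must be constant along every edge and hence constant on each connected component of $\mathcal{G}$. I would then close both implications from this fact. If $\mathcal{G}$ is connected, a path-connectivity argument propagates the equalities $x_i = x_j$ across the entire vertex set, so every kernel vector is a scalar multiple of $\mathbf{1}$, giving $\dim\ker L = 1$; by the spectral theorem the zero eigenvalue is then simple and all remaining eigenvalues are strictly positive. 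Conversely, if $\mathcal{G}$ is disconnected with $k \geq 2$ components, the indicator vectors of the distinct components are $k$ linearly independent elements of $\ker L$, so $\dim\ker L = k \geq 2$ and the zero eigenvalue fails to be simple.

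I expect the main obstacle to be the careful justification of the kernel characterization, namely verifying that $x^T L x = 0$ together with positive semidefiniteness implies $Lx = 0$ (which the factorization $L = B^T B$ supplies cleanly), and then translating the local condition ``$x_i = x_j$ on every edge'' into the global statement ``$x$ is constant on each connected component'' by chaining equalities along paths. Once these are in place, the positivity of the nonzero eigenvalues and the counting of $\dim\ker L$ follow routinely from the spectral theorem for symmetric matrices, completing the equivalence.
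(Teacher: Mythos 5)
Your proof is correct and complete. Note that the paper itself offers no proof of this lemma at all --- it is stated in the preliminaries as a classical fact of algebraic graph theory, without even a citation --- so there is no ``paper proof'' to diverge from; your argument is the standard one that any textbook would give. The chain of reasoning is sound at every step: the quadratic-form identity $x^T L x = \tfrac{1}{2}\sum a_{ij}(x_i-x_j)^2$ gives positive semidefiniteness and, via the factorization $L = B^T B$ (or, equally well, the general fact that $x^T M x = 0$ implies $Mx=0$ for any symmetric positive semidefinite $M$), reduces the kernel computation to the local edge conditions $x_i = x_j$, which chain along paths to constancy on components; connectivity then pins $\ker L = \mathrm{span}\{\mathbf{1}\}$, while $k\geq 2$ components yield exactly $k$ independent indicator vectors in the kernel, killing simplicity of the zero eigenvalue. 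The only cosmetic point is the summation convention: with the factor $\tfrac{1}{2}$ you should sum over ordered pairs (or over all $i,j$), whereas a sum over unordered edges needs no $\tfrac{1}{2}$ --- but this does not affect the argument.
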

\begin{lem}\label{lem - H}\cite{Chen2007Pinning}
	If $L$ is reduciable, $L_{ij} = L_{ji} \leq 0$ for $i \ne j  $, and $\sum_{j=1}^{N}L_{ij} = 0,i=1,2,\ldots,N$ then, for any constant $\varpi >0$, all eigenvalues of the matrix $H= L+B$ are positive, i.e. $\lambda(H)>0$, where $B = diag\{\varpi,0,\ldots,0\}$ 
\end{lem}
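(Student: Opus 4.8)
The plan is to establish that $H = L + B$ is symmetric and positive definite, which immediately yields $\lambda(H) > 0$, since the eigenvalues of a real symmetric positive definite matrix are real and strictly positive. First I would note that $H$ is symmetric: $L$ is symmetric by the hypothesis $L_{ij} = L_{ji}$, and $B$ is diagonal, so $H = L + B = H^{T}$. It therefore suffices to verify $x^{T} H x > 0$ for every nonzero $x \in \mathbb{R}^{N}$.

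Next I would split the quadratic form as $x^{T} H x = x^{T} L x + \varpi\, x_1^2$. The hypotheses $L_{ij} = L_{ji} \le 0$ for $i \ne j$ together with $\sum_{j=1}^{N} L_{ij} = 0$ say precisely that $L$ is a (weighted) graph Laplacian, so by Lemma~\ref{LEM - L} it is positive semi-definite, giving $x^{T} L x \ge 0$; and $\varpi\, x_1^2 \ge 0$ because $\varpi > 0$. Hence $x^{T} H x \ge 0$, i.e. $H$ is at least positive semi-definite, and the whole argument reduces to excluding a nontrivial null vector.

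The decisive step is that exclusion. Suppose some $x \ne 0$ gives $x^{T} H x = 0$. Then both non-negative summands must vanish, so $x^{T} L x = 0$ and $x_1 = 0$. Here I would again invoke Lemma~\ref{LEM - L}: for a connected graph $L$ has a simple zero eigenvalue whose eigenspace is $\mathrm{span}\{\mathbf{1}\}$, where $\mathbf{1}$ is the all-ones vector; thus $x^{T} L x = 0$ forces $x = c\,\mathbf{1}$ for a scalar $c$. The constraint $x_1 = 0$ then gives $c = 0$, hence $x = 0$, contradicting $x \ne 0$. Therefore $x^{T} H x > 0$ for all $x \ne 0$, so $H > 0$ and all its eigenvalues are positive.

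The main obstacle is the precise characterization of $\ker L$, which is exactly where connectivity enters. The hypothesis that $L$ be reducible should be read as $L$ being the Laplacian of a connected graph (equivalently, irreducible, with one-dimensional kernel spanned by $\mathbf{1}$); otherwise $\ker L$ has dimension greater than one and the single positive entry $\varpi$ in $B$ cannot annihilate all null directions, so the conclusion would fail. Granting connectivity, the argument above closes. An equivalent route, avoiding the explicit null-space description, is to observe that $H$ is irreducibly diagonally dominant with exactly one strictly dominant row (the first, thanks to $\varpi > 0$), hence nonsingular, and to combine this nonsingularity with the positive semi-definiteness established in the second step to conclude $\lambda(H) > 0$.
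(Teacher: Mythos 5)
Your proof is correct, and it fills a genuine gap: the paper offers no proof of this lemma at all, importing it verbatim from \cite{Chen2007Pinning}, so there is no in-paper argument to compare against. Your route --- symmetry, the splitting $x^{T}Hx = x^{T}Lx + \varpi x_1^{2}$, positive semi-definiteness of $L$ via Lemma~\ref{LEM - L}, and exclusion of a nontrivial null vector through $\ker L = \mathrm{span}\{\mathbf{1}\}$ --- is the standard argument and essentially the one in the cited source. Two points deserve emphasis. First, your reading of the hypothesis is the right one: ``reduciable'' in the statement is a typo for \emph{irreducible} (equivalently, the underlying graph is connected, matching the paper's standing assumption), and without it the lemma is simply false; e.g.\ for $L = 0 \in \mathbb{R}^{2\times 2}$ (two isolated vertices) one gets $H = \mathrm{diag}\{\varpi, 0\}$, which has a zero eigenvalue. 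Second, two small steps are worth making explicit in a final write-up: Lemma~\ref{LEM - L} asserts only simplicity of the zero eigenvalue, so you need the zero-row-sum hypothesis to observe $L\mathbf{1} = 0$, whence simplicity pins down $\ker L = \mathrm{span}\{\mathbf{1}\}$; and $x^{T}Lx = 0$ together with $L \succeq 0$ forces $Lx = 0$ (write $L = M^{T}M$), which is what lets you conclude $x \in \ker L$ rather than merely $x^{T}Lx = 0$. Your alternative closing via irreducible diagonal dominance (Taussky's theorem: an irreducibly diagonally dominant matrix with at least one strictly dominant row is nonsingular) is equally valid and arguably cleaner, since it bypasses the kernel description entirely; combined with the semi-definiteness already established, it yields $\lambda(H) > 0$ directly.
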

\subsection{Graph theory}\label{GT}
The notation of communication graph in this paper are extensively used in literature. 
The networking topology among $N$ follower is modeled by a positively weighted undirected graphs $\mathcal{G}=(\mathcal{V},\mathcal{E},\mathcal{W})$, where $\mathcal{V}$ denotes a nonempty vertex set $\left\lbrace v_1,v_2,\ldots,v_N \right\rbrace $ describing agents; $\mathcal{E} \subset \mathcal{V} \times \mathcal{V}$ denotes the set of undirected edges $e_{ij}$ describing the information exchanging and $\mathcal{W}= ({w}_{ij})_{N\times N}$ denotes the underlying weighted adjacency matrix with nonnegative elements. 
An undirected edge $e_{ij}$ in graph $\mathcal{G}$ means that nodes $v_i$ and $v_j$ can exchange information with each other. If $e_{ij}$ exists between two nodes, $w_{ij} = w_{ji}>0$; otherwise, $w_{ij} = w_{ji}=0$. 
A graph is connected if every vertex in $\mathcal{V}$ is globally reachable and a vertex $i \in \mathcal{V} $ is globally reachable if any vertex other than $i$ has at least one path starting at the vertex and ending at the vertex $i$.
Furthermore, we assume that $i \not\in \mathcal{N}_i$ (no self-loop contained), and hence for all $i \in \mathcal{V}$, $w_{ii}=0$. The Laplacian matrix $\textbf{L}=[l_{ij}]$ is defined by
\begin{equation*}
l_{ij} = \left\{\begin{aligned}
&  \sum_{j\in \mathcal{N}_i}w_{ij}~~i=j,\\
&  -w_{ij}~~~i\neq j.\\
\end{aligned}\right.
\end{equation*}

For the networking topology with a leader, the total communication topology between the leader and its followers can be formulated by graphs $\bar{\mathcal{G}}$, namely, $\mathcal{G} \subset \bar{\mathcal{G}}$. In $\bar{\mathcal{G}}$, one leader can only send information to out-neighboring followers but not receive reversely. Let $K=[k_1,\ldots,k_N]^T$ denote the set of the weights from the leader to its followers. Accordingly, the Laplacian matrix of $\bar{\mathcal{G}}$ is defined by 

\begin{equation*} 
\bar L =\begin{pmatrix}
0&\textbf{0}\\
-K& H
\end{pmatrix},
\end{equation*}
where $H=\left\lbrace h_{ij}\right\rbrace_{N\times N} = \textbf{L}+D$ and $D=diag\{k_1,\ldots,k_N\}$.

\subsection{Problem formulation}
The second-order MAS considered in this paper consists of one leader and $N$ followers, which can be formulated by 
\begin{equation}\label{sys1}
\begin{split}
\dot x_i (t)  =& v_i (t) ,\\
\dot v_i (t)  =& f\left(t,x_i(t),v_i(t) \right) + u_i (t),\\
\end{split}
\end{equation}
where $x_i(t)$,$v_i(t),u_i(t) \in \mathbb{R}^{n}$ denote the position, velocity and control input of the agent $i$, respectively; $f\left(\cdot \right) $ is a continuously differentiable vector-valued nonlinear function to describe the self-dynamics of agents. The dynamics of the leader is governed by
\begin{equation}\label{sys-leader}
\begin{split}
\dot x_0 (t)  =& v_0 (t) ,\\
\dot v_0 (t)  =& f\left(t,x_0(t),v_0(t) \right),\\
\end{split}
\end{equation}
where $x_0$, $v_0$ are the position and velocity of the leader. 
Throughout this paper, the following assumption is made. 
\begin{assumption}\label{ASS1} 
	For the nonlinear function $f(t,x_i(t),v_i(t))$, the velocity state ${v}_i(t)$ is linearly coupled, which means 
	\begin{equation*}
	f(t,x_i(t),v_i(t))\leq \varsigma v_i(t)+ f(t, x_i(t)) ~~~\forall x_i, v_i \in \mathbb{R}^n ,
	\end{equation*} 
	where $\varsigma$ is a scalar or a matrix with proper dimensions. Additionally, for any $ x,y,z,v \in \mathbb{R}^n $, there exists nonnegative constant $\rho$ such that
	\begin{equation*}
	\|f(t,x_i) - f(t,x_{j})\| \leq \rho \|x_i - x_j \|.
	\end{equation*}
\end{assumption}

In existing literature, the event-triggered controller for agent $i$ is usually designed as (taking \cite{Zhao2017Event} as an example)
\begin{equation}\label{existing}
\begin{split}
u_i(t) =&  - \tilde\alpha \sum_{j=1}^{N}w_{ij} \left[  {x}_j (t_k^j)-{x}_i (t_k^i)+ {v}_j (t_k^j)-{v}_i (t_k^i) \right]\\ 
& -\tilde\alpha k_i  \left[ {x}_i (t) - {x}_0 (t) + {v}_i (t) - {v}_0 (t)\right], t\in [t_k^i,t_{k+1}^i ),
\end{split}
\end{equation}
where $\tilde{\alpha}>0$ is coupling strength and $t_k^j \triangleq \text{arg~min}_p \{t-t_p^j|t\geq t_p^j,p\in \mathbb{N}\}$, i.e., $t_k^j$ is the latest triggering time of agent j before time $t$. The control protocol is distributed since each agent only uses local information of neighboring agents, which can be clearly seen in \eqref{existing}. Similar distributed protocols can be found in \cite{Xie2015Event,Zhu2014Event,Li2015Event,Guo2014A}. In these literature, the feasibilities of the consensus criteria depend on that the coupling gains and the eigenvalue of a special matrix associated the Laplacian matrix must satisfy some additional conditions. For example, in \cite{Zhao2017Event}, $\lambda_{min} (L+D+(L+D)^T)>\frac{2\rho}{\tilde{\alpha}}$, where $L$ denotes Laplacian matrix and $D$ denotes the leader adjacency matrix. To satisfy the condition, the information of Laplacian matrix and leader adjacency matrix has to be known a priori for coupling gains design. 
One may question why not apply a sufficiently small value $\frac{2\rho}{\tilde\alpha}$, without using the global spectra information for solving this problem. It is noticed that a sufficiently small value $\frac{2\rho}{\tilde\alpha}$ means a large value of $\tilde\alpha$, which will directly increase the energy cost of the control. 
Hence it is energy-efficient and of great significance to design a fully distributed approach without using the Laplacian matrix and the leader adjacent matrix. In this paper, we will design an event-triggered communication mechanism to achieve leader-following consensus for second-order MASs and a consensus control protocol with updated coupling gains.

\begin{defi}
	Consensus of a leader-following second-order MAS is said to be asymptotically achieved if both $\lim_{t\rightarrow \infty}\|\hat x_i - \hat{x}_0\| = 0$ and   $\lim_{t\rightarrow \infty}\|\hat v_i - \hat{v}_0\| = 0, i \in \mathbb{N}$ are satisfied for any initial values.  
\end{defi}
\section{Main results}\label{mainresults}
In this section, the main results of this paper are proposed. Generally speaking, the event-triggered transmission strategy consists of two modules \cite{Zhao2017Event}: (a) the consensus control protocol and (b) the event-triggered rule. For {a better understanding}, the overall framework of the proposed event-triggered transmission strategy is illustrated in the Fig.\ref{fig1}, which will be specifically explained in the following subsections.

\begin{figure} 
	\centering
	\includegraphics[%
	width=1\linewidth]{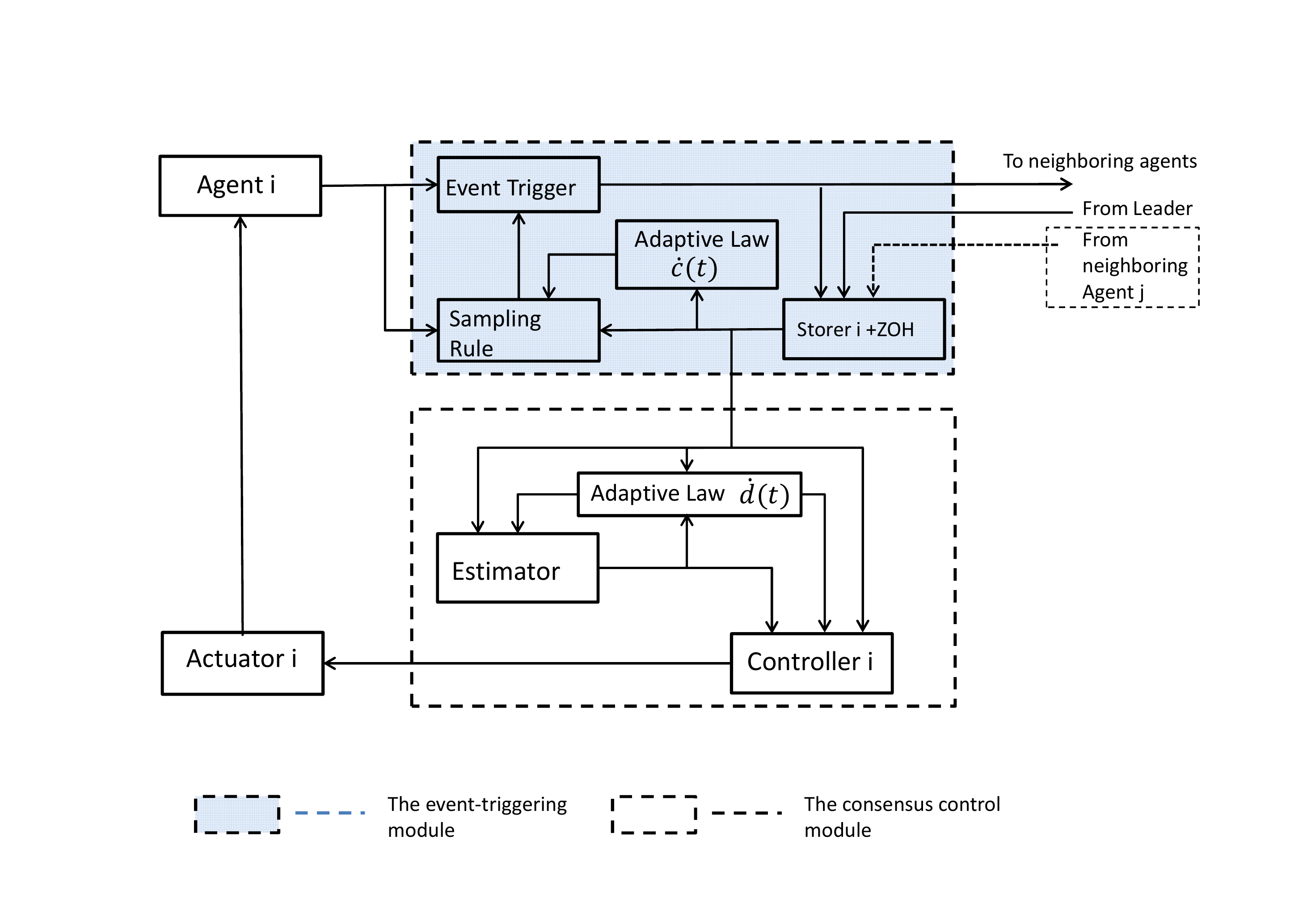}
	\caption{A fully distributed event-triggered transmission strategy for agent-$i$}\label{fig1}
\end{figure}
\subsection{The event-triggered module}
The sampling process of event-trigger mechanisms relies on the event-triggered condition rather than the elapse of a fixed time. Thus the $k$-th sampled-data indicates the data sampled at the $k$-th triggered event.
Denote the $k$-th event-triggered instant of agent-$i$ with $t_k^i$. There exist measurement errors of the event-triggered sampling states $x_i(t_k^i),v_i(t_k^i)$ to its current states $x_i(t),v_i(t)$, which can be defined by $e_{xi}(t) = x_i(t_k^i) -x_i(t)$, and $e_{vi}(t)= v_i(t_k^i) -v_i(t),~~i = 1,2,\ldots,N,$ where $t\in [t_k^i,t_{k+1}^i)$. The next broadcasting instant of agent $i$ is determined by 
\begin{equation}\label{nextins}
t_{k+1}^i = inf\left\lbrace t>t_{k}^i: E_i(t)\geq 0 \right\rbrace,
\end{equation}
where 
\begin{equation}\label{evet-rule}
\begin{split}
&E_i(t)=  \|e_{xi}(t)\|^2 - \underbrace{ d_i(t)sign(d_i)\left\| \sum_{j=1}^{N}L_{ij} x_j(t_k^j) + k_i \left( x_i(t_k^i) - x_0(t) \right)  \right\|^2 }_{\Upsilon(t)} ,
\end{split}
\end{equation} 
and $d_i(t)$ is an updated threshold to be designed and $t_k^j \triangleq \text{arg~min}_p \{t-t_p^j|t\geq t_p^j,p\in \mathbb{N}\}$, i.e., $t_k^j$ is the latest triggering time of agent j before time $t$.
From \eqref{evet-rule}, it can be seen that only relative position information is employed. The workflow of the event-triggered module can be described as follows.
\begin{enumerate}
	\item  The storer $i$ receives the latest state values from the neighboring agents and the leader (if agent $i$ is leader's neighbor). Basing on the information received, storer-$i$ generates the continuous output signals. 
	\item  The adaptive law $\dot d_i(t)$ updates the threshold $d_i(t)$ according to the information from the local storer.
	\item  The sampling rule formulated by \eqref{nextins} processes the sampled-data from the storer with respect to the event-triggered condition.
	\item  The event trigger obtains a triggering signal from the sampling rule and then performs sampling. 
\end{enumerate}
\begin{rmk}
	In the existing literature, there are two forms of control input in the event-triggered control protocol for MAS. One can be formulated by $u_i = \beta\sum_{j\in \mathcal{N}_j}a_{i}\left[ x_j(t_k^i) - x_i(t_k^i)\right] $; another is $u_i = \beta\sum_{j\in \mathcal{N}_j}a_{i}\left[ x_j(t_k^j) - x_i(t_k^i)\right] $, which can be seen that the main difference is the event-triggered sampled time of neighbors' states. 
	In the former scheme, the control input only updates the state signals (from the local agent and the neighboring agents) at the local sampling time instant $t^i_k$; 
	in the latter scheme, these state values need to be updated whenever the local agent samples its state value or receives a new measurement state value from the neighboring agents. The two schemes have their own advantages in different aspects: the latter scheme is superior in the aspect of reducing the burden of networking transmission and the former one serve the purpose of fewer controller updates. Hence, the latter scheme is adopted in this paper from the perspective of alleviating burdens on communication.
\end{rmk}

\begin{rmk}
	In the case that agent $i$ is not the leader's neighbor, the storer $i$ also accounts for zero-order holding of the latest discrete state values received from the neighbors as well as storing them. In the case that it is the leader's neighbor, the store $i$ adds the continuous state values from the leader and the latest discrete state values together and output the sum. It explains why the storer $i$ generates the continuous signals.  
\end{rmk}
\subsection{The consensus control module}
Now we are at the position to present the fully distributed consensus protocol of this paper as follows,
\begin{equation}\label{protocol}
\left\{\begin{aligned}
\dot x_i(t) =& v_i(t),\\
\dot v_i(t) =& f\left(t,x_i(t),v_i(t) \right)  - \alpha c_i(t)\sum_{j=1}^{N}L_{ij}  {x}_j (t_k^j) 
-\alpha c_i(t)k_i  \left[ {x}_i (t_k^i) - {x}_0 (t) \right] - \alpha w_i(t),\\
\dot{w}_i(t)=& -\gamma w_ i -\beta c_i(t)\sum_{j=1}^{N}L_{ij}  {x}_j (t_k^j) 
-\beta c_i(t)k_i  \left[ {x}_i (t_k^i) - {x}_0 (t) \right],
\end{aligned}\right.
\end{equation}
where $\dot w_i$ is the estimator of the networking coupled velocities; $\alpha>0, \beta>0$, $\gamma>0$ are positive coupling gains and $c_i(t)$ is time-varying parameters to be designed.
With Fig.\ref{fig1}, the protocol \eqref{protocol} can be specifically explained by the following workflow. 
\begin{enumerate}
	\item  The adaptive law updates the time-varying gain $c_i(t)$ basing on information from interaction and local estimator;
	\item  The estimator calculates estimates of networking coupling velocities term $w_i(t)$;
	\item  The controller generates the control input and transmits it to the actuator $i$.  
\end{enumerate}

Let  $\tilde{x}_i(t_k^i,t) ={x}_i (t_k^i) - {x}_0 (t)$, $f\left(t,\tilde{x}_i(t),\tilde{v}_i(t) \right) = f\left(t,x_i(t),v_i(t) \right)  - f\left(t,x_0(t),v_0(t) \right)$. The error dynamical equations can be written as 

\begin{equation}\label{protocol-a }
\left\{\begin{aligned}
&\dot {\tilde{x}}_i(t) = v_i(t),\\
&\dot {\tilde{v}}_i(t) = f\left(t,\tilde{x}_i(t),\tilde{v}_i(t) \right)  - \alpha c_i(t)\sum_{j=1}^{N}h_{ij} \tilde{x}_j(t_k^j,t) - \alpha w_i(t),\\
&\dot{w}_i(t)= -\gamma w_ i -\beta  c_i(t)\sum_{j=1}^{N}h_{ij} \tilde{x}_j(t_k^j,t),
\end{aligned}\right.
\end{equation}
where $h_{ij}$ denotes the element of matrix $H$.  
From Lemma \ref{lem - H}, $H$ is positive definite if there is at least one informed agent. Throughout this paper, we make an assumption that there is at least one agent connected to the leader; otherwise, it is impossible to expect the agents in the graph can follow the leader. 
\begin{rmk}
	Since $\sum_{j=1}^{N}L_{ij} = 0$, one can easily derive 
\begin{equation}
\begin{aligned}
\sum_{j=1}^{N}h_{ij} \tilde{x}_j(t_k^i,t) = & \sum_{j=1}^{N}L_{ij}\left[ x_j(t_k^j) - x_0(t)\right]+k_i\left[ x_i(t_k^i) - x_0(t) \right] \\
=& -\sum_{j=1}^{N} w_{ij} \left[ x_j(t_k^j) - x_i(t_k^i)\right]-k_i \left[ x_0(t) - x_i(t_k^i) \right]
\end{aligned}
\end{equation}
\end{rmk}

To facilitate analysis, define a new error state vector $z(t)=[\tilde{x}(t),\tilde{v}(t),{w}(t)]^T \in \mathbb{R}^{3nN}$, where $\tilde{x}(t) = [\tilde{x}_1,\ldots,\tilde{x }_N] \in \mathbb{R}^{nN}$, $\tilde{v}(t) = [\tilde{v}_1,\ldots,\tilde{v}_N]\in \mathbb{R}^{nN}$, ${w}(t) = [{w}_1,\ldots,{w}_N]\in \mathbb{R}^{nN}$. Then the protocol in \eqref{protocol-a } can be recast in the compact form 
\begin{equation}\label{compact-close}
\dot { z}(t) =\widetilde{F}(t,\tilde{x}(t),\tilde{v}(t))+ \widetilde H  z(t) + \widetilde{G} \varepsilon(t),
\end{equation}
where $\varepsilon(t) = [e_{x1} - e_{x0}, \ldots, e_{xN} - e_{x0}]^T\in \mathbb{R}^{nN}$,
$$
\widetilde{H} = \begin{pmatrix}
\textbf{0}&	I_N&				\textbf{0}\\
-\alpha C H &\textbf{0}& -\alpha I_N\\
- \beta CH	&		\textbf{0}		&-\gamma I_N
\end{pmatrix}\in  \mathbb{R}^{3nN\times 3nN}, $$
$\widetilde{G} = \begin{pmatrix}\textbf{0}&-\alpha CH&- \beta C H\end{pmatrix}^T \in \mathbb{R}^{3nN\times nN}$, $\widetilde{F}(t,\tilde{x}(t),\tilde{v}(t))=\begin{pmatrix}\textbf{0}&f\left(t,\tilde{x} (t),\tilde{v}(t) \right)&\textbf{0}\end{pmatrix}^T\in \mathbb{R}^{3nN\times nN}$,
$f\left(t,\tilde{x}(t),\tilde{v}(t) \right)= \left[f\left(t,\tilde{x}_1(t),\tilde{v}_1(t)\right),\ldots,f\left(t,\tilde{x}_N(t),\tilde{v}_N(t) \right)  \right]^T \in \mathbb{R}^{nN} $ and $C$ is diagonal matrix $C=diag\left\lbrace c_1 ,\ldots,c_N \right\rbrace \in \mathbb{R}^{nN}$.

\subsection{Consensus analysis}\label{consensusanalysis}
Based on the event-triggered rule \eqref{evet-rule} and the protocol \eqref{protocol}, the following theorem gives the adaptive laws $\dot c(t)$ and $\dot d(t)$ to guarantee the consensus of the considered MAS in this paper.
\begin{thm}\label{thm1}
	Consider a second-order leader-following multiagent system \eqref{sys1} and \eqref{sys-leader} with the distributed sampling control protocol \eqref{protocol} and the event-triggered sampling rule \eqref{evet-rule}. Suppose that the graph $\mathcal{G}$ is connected and assumption \ref{ASS1}	holds. Then the second-order consensus can be reached under the following distributed adaptive laws:
	
	\begin{eqnarray}\label{dotc}
	\dot c_i(t) =   \tilde{x}_i^T (\beta - \alpha)\sum_{j=1}^{N}h_{ij} \tilde{x}_j(t_k^j,t) + w_i^T(t)\delta \frac{\beta^2 - \alpha^2}{\beta} \sum_{j=1}^{N}h_{ij} \tilde{x}_j(t_k^j,t)  ,
	\end{eqnarray}
	\begin{eqnarray}\label{dotd}	
	\dot d_i(t) = - \xi_i\sum_{j=1}^{N}h_{ij} \tilde{x}_j^T(t^j_k,t)\sum_{j=1}^{N}h_{ij} \tilde{x}_j (t^j_k,t).
	\end{eqnarray}
	where $\delta >0$,  $\xi_i>0$ are constants. 
\end{thm}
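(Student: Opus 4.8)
The plan is to prove consensus by a Lyapunov--Barbalat argument in which a single sufficiently large constant $\bar c$ --- introduced only in the analysis, never in the protocol --- absorbs all centralized information about $\lambda_{min}(H)$, thereby securing the fully distributed property. Write $s_i(t)=\sum_{j=1}^{N}h_{ij}\tilde x_j(t_k^j,t)$ for the sampled coupling term recurring in \eqref{protocol}, \eqref{dotc} and \eqref{dotd}. From $\tilde x_j(t_k^j,t)=\tilde x_j(t)+e_{xj}(t)$ one has $s_i=\sum_j h_{ij}(\tilde x_j+e_{xj})$, so $s_i$ splits into a clean part $\sum_j h_{ij}\tilde x_j$ and a sampling-error part. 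The event rule \eqref{evet-rule} guarantees, for all $t$, the inequality $\|e_{xi}(t)\|^2\le |d_i(t)|\,\|s_i(t)\|^2$, which is the only handle on the measurement error.

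I would take $V=V_z+V_c+V_d$ with
\begin{equation*}
V_z=\tfrac12\sum_{i=1}^{N}\big(A\|\tilde x_i\|^2+B\|\tilde v_i\|^2+C\|w_i\|^2\big)+\sum_{i=1}^{N}\big(\tilde x_i^T w_i-\tilde x_i^T\tilde v_i+F\,\tilde v_i^T w_i\big),
\end{equation*}
$V_c=\tfrac12\sum_i(c_i-\bar c)^2$ and $V_d=\sum_i\tfrac{1}{2\xi_i}(d_i-\bar d)^2$. The cross term $\sum_i\tilde x_i^T(w_i-\tilde v_i)$ is chosen so that $\alpha\frac{\partial V_z}{\partial\tilde v_i}+\beta\frac{\partial V_z}{\partial w_i}=(\beta-\alpha)\tilde x_i+\delta\tfrac{\beta^2-\alpha^2}{\beta}w_i$, which forces $F=-\alpha B/\beta$ and a matching value of $C$; consequently the entire $c_i s_i$ contribution of $\dot V_z$ equals $-c_i\big[(\beta-\alpha)\tilde x_i^T s_i+\delta\tfrac{\beta^2-\alpha^2}{\beta}w_i^T s_i\big]$. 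This is exactly what the adaptive law \eqref{dotc} is engineered to cancel: the $c_i$-dependent terms of $\dot V_z+\dot V_c$ annihilate, leaving only $-\bar c\sum_i\big[(\beta-\alpha)\tilde x_i^T s_i+\delta\tfrac{\beta^2-\alpha^2}{\beta}w_i^T s_i\big]$.

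Substituting the clean/error split of $s_i$ then produces the crucial term $-\bar c(\beta-\alpha)\tilde x^T(H\otimes I_n)\tilde x\le-\bar c(\beta-\alpha)\lambda_{min}(H)\|\tilde x\|^2$, which is strictly negative because $H>0$ by Lemma \ref{lem - H} and, as the coefficients of \eqref{dotc} already suggest, $\beta>\alpha$. The residual cross term $w^T(H\otimes I_n)\tilde x$ and the sampling-error terms $\tilde x^T(H\otimes I_n)e_x$, $w^T(H\otimes I_n)e_x$ are split by Young's inequality; Assumption \ref{ASS1}, through $\|f(t,\tilde x_i,\tilde v_i)\|\le\|\varsigma\|\,\|\tilde v_i\|+\rho\|\tilde x_i\|$, turns the $\widetilde F$ contribution into a quadratic form in $(\tilde x,\tilde v,w)$. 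For the error I use $\dot V_d=-\sum_i(d_i-\bar d)\|s_i\|^2$ together with the triggering bound to get $\dot V_d\le-\|e_x\|^2+\bar d\,\|s\|^2$; taking $\bar d$ small enough that $2\bar d\|H\otimes I_n\|^2<1$ yields a genuinely negative $\|e_x\|^2$ that absorbs the Young leftovers, at the cost of a $\|\tilde x\|^2$ term again dominated by the $\bar c$ term. Collecting everything, $\dot V\le -z^T(Q\otimes I_n)z$ for a matrix $Q$ whose negative definiteness I would certify by the Schur complement (Lemma \ref{lem-SCHUR}), choosing $A$ large, $B$ small enough that $B\|\varsigma\|<1$, $\gamma$ large enough to control the $\|w\|^2$ block, and $\bar c$ large enough to swallow every residual positive $\|\tilde x\|^2$ coefficient.

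The conclusion then follows by Barbalat. Since $\dot V\le0$, $V$ is bounded, so $\tilde x,\tilde v,w$ and the adaptive variables are bounded (with $d_i$ monotone by \eqref{dotd}, hence convergent); boundedness of the states makes $\dot{\tilde x},\dot{\tilde v},\dot w$ bounded, so $z^T(Q\otimes I_n)z$ is uniformly continuous, and integrating $\dot V$ shows its integral is finite. Lemma \ref{lem3} (or equivalently Lemma \ref{lem-BABA}) then gives $z^T(Q\otimes I_n)z\to0$, i.e. $\tilde x\to0$ and $\tilde v\to0$, which is precisely leader-following consensus. The step I expect to be the main obstacle is the negative-definiteness certification of $Q$: the cross-couplings among $\tilde x,\tilde v,w$ and the error $e_x$, together with $\varsigma$ and $\rho$, must be dominated simultaneously, and one must prove this is achievable by an a-priori-unknown but existent pair $(\bar c,\bar d)$ that the agents never compute --- this existence argument is exactly what replaces the centralized condition $\lambda_{min}(L+D+(L+D)^T)>2\rho/\tilde\alpha$ of the earlier literature with a fully distributed guarantee.
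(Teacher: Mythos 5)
Your skeleton is the paper's own argument in only lightly disguised form: your $V_z$ with cross terms $\tilde x_i^T w_i-\tilde x_i^T\tilde v_i+F\,\tilde v_i^Tw_i$ and the constraint $F=-\alpha B/\beta$ is exactly the paper's $\tfrac12\tilde z^T(\Omega\otimes I_n)\tilde z$ with $\Omega$ built from $\mu,\varpi,\eta$ and the $(2,3)$-entry $-\tfrac{\alpha}{\beta}\eta$ (with $\delta=\eta/\varpi$); the cancellation of the $c_i$-dependent terms by \eqref{dotc}, leaving a constant matrix $\hat C$ (your $\bar c$) to dominate via $H>0$ from Lemma \ref{lem - H}, the compact use of the trigger bound, the Schur-complement certification of a negative-definite block matrix ($\Pi<0$ in the paper), and the Barbalat conclusion via Lemma \ref{lem-BABA} all coincide with the published proof. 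So the route is right, and your closing observation --- that existence of an uncomputed $(\bar c,\bar d)$ replaces the centralized spectral condition --- is precisely the paper's Remark on full distributedness.

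There is, however, one genuine gap: your treatment of the threshold variable $d_i$ silently assumes $d_i(t)\geq 0$. The law \eqref{dotd} gives $\dot d_i\leq 0$ whenever $s_i\neq 0$, so nothing prevents $d_i$ from crossing zero, and at that point your chain breaks in two places. First, the trigger \eqref{evet-rule} only guarantees $\|e_{xi}\|^2\leq |d_i|\,\|s_i\|^2$ (note the $\mathrm{sign}(d_i)$ factor in $\Upsilon$), so your step $\dot V_d=-\sum_i(d_i-\bar d)\|s_i\|^2\leq -\|e_x\|^2+\bar d\|s\|^2$ uses $-d_i\|s_i\|^2\leq-\|e_{xi}\|^2$, which is false for $d_i<0$: there $-d_i\|s_i\|^2=+|d_i|\|s_i\|^2\geq\|e_{xi}\|^2$ is a \emph{positive} contribution that grows as $d_i$ decreases, and with your $V_d=\sum_i\tfrac{1}{2\xi_i}(d_i-\bar d)^2$ it cannot be absorbed. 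Second, your Barbalat step invokes ``$d_i$ monotone, hence convergent,'' but convergence needs a lower bound on $d_i$, which you only get from boundedness of $V$ --- i.e., from the very inequality $\dot V\leq 0$ whose derivation assumed $d_i\geq 0$; the argument is circular as written. The paper closes exactly this hole by design: the trigger threshold is $d_i\,\mathrm{sign}(d_i)=|d_i|$, the Lyapunov term is $(d_i+\hat d_i)^2$ (plus sign, not your $(d_i-\bar d)^2$), and the residual matrix $\Delta_D=D-\bar D+\hat D$ (with $\bar D=\mathrm{diag}\{d_i\,\mathrm{sgn}(d_i)\}$) is kept positive semidefinite by choosing $\hat d_i\geq 2\,d_i(t_0)\,\mathrm{sgn}(d_i(t_0))$, exploiting $d_i(t)\leq d_i(t_0)$; then the trigger inequality alone delivers the clean $-\omega\varepsilon^T\varepsilon$ term while $-(\tilde x+\varepsilon)^T\Delta_D H^2(\tilde x+\varepsilon)\leq 0$ for all $d_i\in(-\infty,\infty)$. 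Note also that this fix wants the compensating constant \emph{large}, whereas your bookkeeping needs $\bar d$ \emph{small} (to keep $2\bar d\|H\otimes I_n\|^2<1$), so the repair is not a one-line patch of your $V_d$ but a switch to the paper's sign-corrected form.
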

\begin{proof}
	Consider the following Lyapunov function candidate 
	\begin{equation}\label{LF}
	V = \frac{1}{2}\tilde z^T(t) \Omega\otimes I_n \tilde{z}(t)  + \sum_{i=1}^{N} \frac{\varpi }{2\zeta_i}(c_i (t) - \hat c_i) ^2 +  \sum_{i=1}^{N} \frac{\omega}{2\xi_i}(d_i (t) + \hat{d}_i)^2,
	\end{equation}
	where $\Omega = \begin{pmatrix}
	\mu& -\varpi &\varpi\\
	* & \eta& -\frac{\alpha}{\beta}\eta \\
	*&*&\eta
	\end{pmatrix}\otimes I_N$, $\varpi$, $\omega$, $\hat c_i$ and $\hat d_i $ are positive constants to be determined.  By letting the parameters in matrix $\Omega$ satisfy $\mu \gg \varpi >0 $, $\eta >0$, it can be guaranteed that $\Omega>0$. The positive semi-definiteness of $V$ in \eqref{LF} can also be easily ensured, which means $V(\tilde z(t) ,\varepsilon, t) \geq 0 $ and  $V(\tilde z(t) ,\varepsilon, t)= 0$ , if and only if $\tilde{z} (t) = 0$ and all $c_i(t) = \hat c_i$ and $d_i(t) = \hat d_i$. 
	For simplicity, we assume $n=1$ in the proof and $I_n$ is equivalent to 1 such that it will be omitted hereafter.
	
	Differentiating \eqref{LF} along the trajectories of \eqref{compact-close} yields 
	\begin{equation}\label{dotV - 1}
	\begin{split}
	\dot V(\tilde z(t) ,\varepsilon, t) =
	& \tilde{z}^T \Omega\dot {\tilde z}(t) + \sum_{i=1}^{N} \frac{\varpi}{\zeta_i}(c_i (t) - \hat c_i) \dot c_i (t)+\sum_{i=1}^{N} \frac{\omega}{\xi_i}(d_i (t) + \hat{d}_i) \dot d_i (t)\\
	=&\tilde{z}^T(t) \Omega {F} + \tilde{z}^T(t) \frac{1}{2}\left(\Omega \widetilde{H}+ \widetilde{H}\Omega^T \right) \tilde{z}(t)+ \tilde{z}^T(t) \Omega \widetilde{G} \varepsilon(t)\\
	&+ \sum_{i=1}^{N} \frac{\varpi }{\zeta_i}(c_i (t) - \hat c_i) \dot c_i (t)+ \sum_{i=1}^{N} \frac{\omega}{\xi_i}(d_i (t) + \hat{d}_i) \dot d_i (t),\\
	\end{split}
	\end{equation}
	where
	\begin{equation*}
	\frac{1}{2}\left(\Omega \widetilde{H}+ \widetilde{H}\Omega^T \right) = \begin{pmatrix}
	\varpi( \alpha- \beta ) C H   & \frac{1}{2}\mu &	\frac{\varpi }{2}(\alpha-\gamma)+\frac{\alpha^2-\beta^2}{2\beta}\eta C H  \\
	*								& -\varpi&-\alpha \eta+\frac{\varpi}{2}+ \frac{\alpha\gamma}{2\beta}\eta\\
	*& *& \frac{\alpha^2 }{\beta}\eta - \gamma\eta
	\end{pmatrix},
	\end{equation*}
	\begin{equation*}
\Omega\widetilde{G}=
\left[ \varpi(\alpha - \beta) C H ,0,\frac{\alpha^2-\beta^2}{\beta}\eta C H \right] ^T,
	\end{equation*}
	 and $\Omega F =\left[ 
	-\varpi f\left(* \right),
	\eta  f\left(* \right),
	-\frac{\alpha}{\beta} \eta  f\left(* \right)
	\right] ^T.$
	From \eqref{dotc}, one obtains    
	\begin{equation}\label{compact-dotc}
	\sum_{i=1}^{N} \frac{\varpi}{\zeta_i}(c_i (t) - \hat c_i) \dot c_i (t) =  \tilde{z}^T\varpi \begin{pmatrix}
	\beta-\alpha \\
	0\\
	\delta\frac{\beta^2- \alpha^2 }{\beta}
	\end{pmatrix}\otimes (C- \hat C) H  (\tilde{x} + \varepsilon)
	\end{equation}
	where $\hat C = diag\{\hat c_1,\ldots,\hat c_N \}$.
	
	Let $\frac{\eta}{\varpi} = \delta $. By substituting \eqref{compact-dotc} into \eqref{dotV - 1} and some simple calculation, one has 
	\begin{equation}\label{dotV - 2}
	\begin{split}
	\dot V(\tilde z(t) ,\varepsilon, t) 
=	
	&\tilde{z}^T(t) \Omega F +\tilde{z}^T(t) \frac{1}{2}\left(\Omega \bar{H}+ \bar{H}\Omega^T \right) \tilde{z}(t)+ \tilde{z}^T(t) \Omega \bar{G} \varepsilon(t)\\
	& + \sum_{i=1}^{N} \frac{\omega}{\xi_i}(d_i (t) + \hat d_i) \dot d_i (t), 
	\end{split}
	\end{equation}
	where 
	\begin{equation*}
	\frac{1}{2}\left(\Omega \bar{H}+ \bar{H}\Omega^T \right) = \begin{pmatrix}
	\varpi( \alpha- \beta ) \hat  C H   & \frac{1}{2}\mu &	\frac{\varpi }{2}(\alpha-\gamma)+\frac{\alpha^2-\beta^2}{2\beta}\eta \hat C H  \\
	*								& -\varpi&-\alpha \eta+\frac{\varpi}{2}+ \frac{\alpha\gamma}{2\beta}\eta\\
	*& *& \frac{\alpha^2 }{\beta}\eta - \gamma\eta
	\end{pmatrix},
	\end{equation*}
	$\Omega\bar{G}=
	\left[ \varpi(\alpha - \beta) \hat C H ,0,\frac{\alpha^2-\beta^2}{\beta}\eta \hat C H \right] ^T$. 
	Following Assumption \ref{ASS1}, one gets
	\begin{equation}\label{OF}
	\begin{split}
	\tilde{z}^T(t) \Omega F = 		&\sum_{i=1}^{N}\left( -\varpi \tilde{x}_i^T + \eta \tilde{v}_i^T - \frac{\alpha}{\beta} \omega_i^T  \right) \left[ f\left(t,x_i(t),v_i(t) \right)  - f\left(t,x_0(t),v_0(t) \right)\right] \\
	\leq	& \varsigma\left( -\varpi \tilde{x}_i^T + \eta \tilde{v}_i^T -\frac{\alpha}{\beta}\omega_i^T\right)\tilde{v}_i  +\rho \left(\varpi \|\tilde{x}_i\|^2+ \eta\|\tilde{v}_i\tilde{x}_i\|+\frac{\alpha}{\beta}\|\omega_i\tilde{x }_i\|\right)  \\
	\leq	&\varsigma\left( -\varpi \tilde{x}_i^T + \eta \tilde{v}_i^T -\frac{\alpha}{\beta}\omega_i^T\right)\tilde{v}_i +\kappa\|z\|^2,
	\end{split}
	\end{equation}
	where $\kappa = max\{\rho (\varpi+\frac{\eta}{2}+\frac{\alpha}{2\beta}), \frac{\eta}{2}, \frac{\alpha}{2\beta}\}$.

	Recasting the event-triggered condition \eqref{evet-rule} in the compact form, one obtains
	\begin{equation} \label{dotV - a1} 
-\omega\varepsilon^T(t)\varepsilon(t)+\omega\bar D\tilde{x}^T(t_k,t)H^2 \tilde{x}(t_k,t)\geq 0. 
	\end{equation} 
	where $\bar {D} = diag\{d_1sgn(d_1),\ldots,d_N sgn(d_N)\}$.
	
	Besides, substituting the adaptive law \eqref{dotd} into  $\sum_{i=1}^{N} \frac{\omega}{\xi_i}(d_i (t) + \hat{d}_i) \dot d_i (t)$, one gets
	\begin{equation}\label{dotV - d}
	\begin{split}
	\sum_{i=1}^{N} \frac{\omega}{\xi_i}(d_i (t) + \hat{d}_i) \dot d_i (t) =& - \omega\left[\tilde{x} (t) +  \varepsilon(t) \right]^T(D+\hat D)  H^2 \left[\tilde{x} (t) + \varepsilon(t) \right],
	\end{split}
	\end{equation}
	where $D=diag\{d_1,\ldots,d_N\},\hat{D}=diag\{\hat d_1,\ldots,\hat d_N\}$ Combining \eqref{dotV - a1} and \eqref{dotV - d}, then
	
	\begin{equation} \label{dotV - a2} 
	\begin{split}
	\sum_{i=1}^{N} \frac{\omega}{\xi_i}(d_i (t) + \hat{d}_i) \dot d_i (t) 
	\leq&	- \omega \varepsilon^T(t) \varepsilon(t) -\omega\left[\tilde{x} (t) + \varepsilon(t) \right]^T\underbrace{(D-\bar{D}+\hat D)}_{\Delta_D}  H^2 \left[\tilde{x} (t) + \varepsilon(t) \right],
	\end{split}
	\end{equation} 
	holds. From the definitions of $D$ and $\bar D$, $\Delta_D=\hat D$, if $ d_i(t)\geq 0$; $\Delta_D = \hat D-2\bar D$, otherwise. 
	Besides, recalling the definition $\eqref{dotd}$, one can observe that $\dot{d}_i(t)\leq 0$, which means the value of $d_i(t)$ will never increase. Then $d_i(t)\leq d_i(t_0)$, where $d_i(t_0)$ denotes the value of $d_i(t)$ at initial time instant $t_0$. Here, by choosing the constants $\hat{d}_i\geq 2 d_i(t_0)sgn(d_i(t_0))\geq 2d_isgn(d_i)$, it is not hard to derive $\Delta_D\geq 0, \forall d_i(t)\in (-\infty,+\infty)$. 
	Substituting \eqref{dotV - a2} into \eqref{dotV - 2}, one obtains
	\begin{equation}\label{5-a}
	\begin{split}
	\dot V(\tilde z(t) ,\varepsilon, t) 
	=&	\tilde{z}^T(t) \Omega F + \tilde{z}^T(t) \frac{1}{2}\left(\Omega \bar{H}+ \bar{H}\Omega^T \right) \tilde{z}(t)+\tilde{z}^T(t) \Omega \bar{G} \varepsilon(t)- \omega \varepsilon^T(t)  \varepsilon(t)  \\
	&-\omega\left[\tilde{x} (t) + \varepsilon(t) \right]^T\Delta_D  H^2 \left[\tilde{x} (t)  + \varepsilon(t) \right]\\ 
	\leq &\begin{pmatrix}
	z^T(t)&\epsilon^T
	\end{pmatrix}\Pi\begin{pmatrix}
	z(t)\\\epsilon
	\end{pmatrix}   ,
	\end{split}
	\end{equation}
	
	where 
	$$
	\Pi = \begin{pmatrix}
	\Pi_{11}&\Pi_{12}\\
	*& \Pi_{22}
	
	\end{pmatrix}
	$$
	
	\begin{equation*}
	\Pi_{11}= \begin{pmatrix}
	\varpi(\alpha- \beta ) \hat{C} H+\kappa I_N -\omega\Delta_D H^2  &(\frac{1}{2}\mu -\frac{\varpi \varsigma}{2})I_N &	\frac{\varpi }{2}(\alpha-\gamma)+\frac{\alpha^2-\beta^2}{2\beta}\eta \hat C H  \\
	*								& \left( -\varpi + \varsigma \eta + \kappa \right)  I_N &-\alpha \eta+\frac{\varpi}{2}+ \frac{\alpha\gamma}{2\beta}\eta - \frac{\varsigma\alpha}{2\beta}\\
	*& *& \left( \frac{\alpha^2 }{\beta}\eta - \gamma\eta+\kappa\right) I_N
	\end{pmatrix},  
	\end{equation*}
	\begin{equation*}
	\Pi_{12} = \begin{pmatrix}
	\varpi(\alpha - \beta) \hat{C} H  -\omega\Delta_D H^2  \\
	0\\
	\frac{\alpha^2-\beta^2}{\beta}\eta \hat{C} H
	\end{pmatrix},~~~\Pi_{22}= - \omega(I_N+ \Delta_D H^2).
	\end{equation*}
	
	By properly selecting the parameters $\mu,\varpi$, $\eta$, $\omega$, $\hat{c}_i$, $\hat{d}_i$ in Lyapunov function candidate \eqref{LF}, not hard to derive that $\Pi<0$ holds with the help of Lemma \ref{LEM - L} and Lemma \ref{lem - H}.  It is observed that $V(\tilde z(t) ,\varepsilon, t) $ and $\dot V(\tilde z(t) ,\varepsilon, t) $ satisfy conditions (1) and (2) of Lemma.\ref{lem-BABA}, respectively. To verify the condition (3) in Lemma \ref{lem-BABA}, the following analysis is needed. From \eqref{5-a} and $\Pi<0$, one may easily derive that $\tilde{z}, c(t),d(t)$ are all bounded. 
	Also, from \eqref{dotV - a2}, the boundedness of $\tilde{z}(t)$ is used to derive that $\epsilon(t)$ is bounded.
	Then from \eqref{compact-close}, \eqref{dotc} and \eqref{dotd}, one can derive the boundedness of $\dot {\tilde{z}} (t),\dot c(t),\dot d(t)$. 
	Then, by invoking \eqref{dotV - 1}, it is finally obtained that $\ddot V(\tilde z(t) ,\varepsilon, t)$ is bounded, i.e., condition (3) in lemma \ref{lem-BABA} is satisfied. Hence the proof can be completed. 
	\begin{rmk}
		One may question that the matrix $H$ including the Laplacian matrix $L$ as well as the matrix $D$ from the whole graph information and topology needs to be known by each agent when solving LMIs to guarantee $\Pi <0$ and the method could not considered as a fully distributed method. It should be pointed out that the parameters obtained by solving $\Pi<0$ are based on the fact $H>0$. Namely, as long as the matrix $H$ is positive definite, the proposed method guarantees the consensus of the network. It is well known $H$ is positive definite if there is at least one informed agent, which is assumed throughout the paper. Therefore, the method is fully distributed.
	\end{rmk}
	\begin{rmk}
		From the Themrem \ref{thm1}, it can be seen that the event-triggered second-order consensus in the considered leader-following MAS can be reached under the distributed adaptive laws \eqref{dotc} and \eqref{dotd} without requiring any centralized conditions, like some existing literature \cite{Zhao2017Event}, \cite{Li2015Event}, \cite{Guo2014A} and \cite{Zhu2014Event}. In the whole networking control design, including the event-triggered rule and the consensus protocol, only local information of neighboring agents is used.  
	\end{rmk}
	\begin{rmk}
		One may notice that the dimension of $\Pi$ is $4N$, which may result in that the selection of the parameters in the Lyapunov function candidate is not easy. As a matter of fact, the selecting of parameters can transfer to the problem of solving feasible solutions of multiple linear matrix inequations. By solving these LMIs, one can easily obtain proper feasible solutions. Also, we provide an example of the feasible solutions for these parameters in the numerical result section. 
	\end{rmk}
	
\end{proof}

The following theorem shows the existence of a lower bound of inter-event times, which means that the Zeno behavior is excluded in Theorem \ref{thm1}.
\begin{thm}\label{thm2}
	With the event-triggered consensus protocol and the conditions given in Theorem \ref{thm1},  there exists no agent in MAS \eqref{sys1} that exhibits Zeno behavior during the consensus process. That is, for each agent $i\in \mathcal{V}$, the inter-event time $\Delta_{k}^i = t_{k+1}^i - t_{k}^i = \tau>0 . $  
\end{thm}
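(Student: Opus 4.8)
The plan is to follow the classical measurement-error argument: show that after each trigger the sampling error must first grow by a definite amount before the rule can fire again, and that this growth is rate-limited by signals that Theorem \ref{thm1} has already shown to be bounded. First I would use that at each triggering instant the error resets, $e_{xi}(t_k^i)=0$, and that on the interval $[t_k^i,t_{k+1}^i)$ the held sample $x_i(t_k^i)$ is constant, so $\dot e_{xi}(t)=-\dot x_i(t)=-v_i(t)$. Hence $\tfrac{d}{dt}\|e_{xi}(t)\|\le\|\dot e_{xi}(t)\|=\|v_i(t)\|$. By Theorem \ref{thm1} the closed-loop error state $\tilde z$ is bounded, and together with the leader velocity this furnishes a bound $\|v_i(t)\|\le M$ on the interval under consideration; integrating from $t_k^i$ gives $\|e_{xi}(t)\|\le M\,(t-t_k^i)$.

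Next I would invoke the triggering law \eqref{nextins}--\eqref{evet-rule}. A new event at $t_{k+1}^i$ forces $E_i(t_{k+1}^i)=0$, i.e. $\|e_{xi}(t_{k+1}^i)\|^2=\Upsilon(t_{k+1}^i)=|d_i(t_{k+1}^i)|\,\|q_i(t_{k+1}^i)\|^2$, where $q_i(t)=\sum_{j=1}^{N}L_{ij}x_j(t_k^j)+k_i\big(x_i(t_k^i)-x_0(t)\big)$ is the relative-position measurement used in the rule. Combining with the growth bound yields $M^2(t_{k+1}^i-t_k^i)^2\ge |d_i(t_{k+1}^i)|\,\|q_i(t_{k+1}^i)\|^2$, so that the inter-event time obeys $t_{k+1}^i-t_k^i\ge \sqrt{|d_i(t_{k+1}^i)|}\,\|q_i(t_{k+1}^i)\|/M$. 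Whenever the measurement $q_i$ is nonzero this bound is strictly positive, which already rules out two events coinciding and gives the claimed $\tau>0$ for each individual interval.

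The step I expect to be the main obstacle is upgrading this per-interval positivity into a genuine exclusion of Zeno behavior, because the right-hand side carries the state-dependent factor $\|q_i\|$, which tends to zero as consensus is approached; a careless reading therefore does not by itself forbid an accumulation of triggers. I would close this gap by contradiction: suppose agent $i$ were Zeno, so its triggering times accumulate at a finite $T<\infty$. On the compact interval $[0,T]$ every signal generated by \eqref{protocol} is continuous and, by Theorem \ref{thm1}, bounded, so $M$ may be taken uniform there. As the triggers cluster at $T$, the held samples $x_j(t_k^j),\,x_i(t_k^i)$ converge to $x_j(T),\,x_i(T)$ by continuity, whence $q_i(t_{k+1}^i)\to \sum_{j}L_{ij}x_j(T)+k_i\big(x_i(T)-x_0(T)\big)$ and $|d_i(t_{k+1}^i)|\to|d_i(T)|$. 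Because the consensus errors of Theorem \ref{thm1} vanish only as $t\to\infty$ and not at a finite instant, this limit is nonzero, so $\sqrt{|d_i|}\,\|q_i\|/M$ remains bounded below by a positive constant near $T$; the inter-event times then cannot accumulate, contradicting the assumption. The delicate sub-case is the degenerate configuration $q_i(T)=0$ at finite $T$, where the threshold itself vanishes in the limit; here one must argue separately, e.g. by comparing the growth rates of $\|e_{xi}\|$ and $\|q_i\|$ through $\dot e_{xi}=-v_i$ and $\dot q_i=-k_i v_0$, to verify that no infinite family of crossings can form. Establishing this last point rigorously is the crux of the proof; once it is settled, every agent admits a strictly positive inter-event time and Zeno behavior is excluded.
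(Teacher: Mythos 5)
Your opening steps coincide with the paper's own proof: the error reset $e_{xi}(t_k^i)=0$, the relation $\dot e_{xi}=-v_i$, a velocity bound $M$, the linear growth estimate $\|e_{xi}(t)\|\le M\,(t-t_k^i)$, and the evaluation of the rule \eqref{nextins}--\eqref{evet-rule} at the next trigger are exactly \eqref{4-2-1}--\eqref{4-2-2} with $M=M_v$. The genuine gap is the one you flag yourself but do not close. Your contradiction argument needs $\sqrt{|d_i|}\,\|q_i\|$ to stay bounded below by a positive constant near a putative finite accumulation time $T$, and the justification you offer --- that consensus errors vanish only as $t\to\infty$ --- does not deliver this: $q_i(T)=0$ is a single algebraic condition on the held samples and the leader position, and it can hold transiently at a finite instant without consensus being anywhere near achieved (trajectories crossing suffice). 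Worse, by \eqref{dotd} the threshold $d_i(t)$ is non-increasing and may cross zero, at which point the factor $|d_i(t)|$ itself vanishes and the triggering condition $E_i(t)\ge 0$ is satisfied by any nonzero $e_{xi}$, which is precisely a Zeno-prone configuration. You explicitly defer this degenerate case (``one must argue separately \ldots the crux''), so as written the proposal is not a proof: the one step that actually excludes Zeno is missing. Your suggested repair --- comparing growth rates via $\dot q_i=-k_i v_0$ --- is also unsound across events, since $q_i$ jumps whenever a neighboring agent triggers and so is not governed by that differential relation on any interval containing a neighbor's event.

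For fairness of comparison: the paper's proof of Theorem~\ref{thm2} takes the same route and disposes of this crux by assumption rather than argument. It postulates a constant $\underline{\varUpsilon}_i>0$ with $\varUpsilon_i(t)\ge\underline{\varUpsilon}_i$ ``before consensus is reached,'' asserting that $\varUpsilon_i=0$ at an event time means consensus has already been achieved and no trigger is needed; with that hypothesis in hand, $\bigl[M_v(t_{k+1}^i-t_k^i)\bigr]^2\ge\underline{\varUpsilon}_i$ immediately yields a uniform positive dwell time. So your analysis is in one respect more candid than the published one: you identified exactly the point that the paper postulates away, and you correctly observed that the state-dependent per-interval bound $\sqrt{|d_i|}\,\|q_i\|/M$ does not by itself forbid accumulation. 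But measured against the task of proving the theorem, your write-up stops where the difficulty begins; to finish you would have to either adopt the paper's (itself unproven) uniform lower-bound hypothesis on $\varUpsilon_i$, or supply a genuinely new argument for the degenerate case $q_i(T)=0$ or $d_i(T)=0$ --- neither of which is yet present in the proposal.
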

\begin{proof}
	Suppose the velocities of all agents in the network considered are bounded by $M_v>0$. 
	At the triggering time instants $\{t^i_k\}_{k=0}^{\infty}$, $e_{xi}(t^i_k) = 0, i=1,2,\ldots,N$ from the definition of $e_{xi}(t)$. 
	In each interval $t\in \left[t_k^i,t_{k+1}^i \right)$, one gets
	\begin{equation}\label{4-2-1}
	\begin{split}
	\|e_i(t)\| =& \|\int_{t_{k}^i}^{t}\dot e_i(s)ds \|\leq \int_{t_{k}^i}^{t}\|\dot e_i(s)\|ds \leq  \int_{t_{k}^i}^{t}\|\dot x_i(s)\|ds\\
	\leq&    \int_{t_{k}^i}^{t}\|v_i(s)\|ds \leq M_v(t-t_{k}^i), \forall t\in \left[t_k^i,t_{k+1}^i \right)
	\end{split}
	\end{equation}
	According to event triggered rule \eqref{nextins}, the next event will not be triggered until trigger function $E_i(t)= 0$, which means that, for agent i the next sampling time instant $t= t_{k+1}^i$ is at the moment when $\|e_{xi}(t_{k+1}^i,t)\|^2 =  {\varUpsilon_i(t)}$ holds, where $\Upsilon(t)$ is defined in \eqref{evet-rule}.
	
	Assume that before consensus is reached, there exists a positive constant $\underline{\varUpsilon}_i$ such that $\varUpsilon_i (t)\geq \underline{\varUpsilon}_i>0 $, for some $t\in \{t_l^i \}_{l=0}^{\infty}$; otherwise, $\varUpsilon_i (t_{k'}^i) =0$ for some $t\in \{t_{k'}^i \}_{k'=0}^{\infty}$. At the event time $t_{k'}^i$, the consensus has been achieved and there is no need to trigger the event.  That is to say, before the consensus being achieved, from \eqref{4-2-1}, one gets
	\begin{equation}\label{4-2-2}
	\begin{split}
	\left[M_v(t_{k+1}^i - t_k^i) \right]^2& \geq \|e_{xi}(t_{k+1}^i)\|^2 =  {\varUpsilon_i(t)}\geq{\underline{\varUpsilon}_i}>0.
	\end{split}
	\end{equation}
	Now we will prove $\lim_{k\rightarrow \infty}t_k^i = \infty$ by contradiction. Assuming $\lim_{k\rightarrow \infty}t_k^i =T^i< \infty$, one can easily drive
	$
	{\underline{\varUpsilon}_i} \leq   \left[M_v(t_{k+1}^i - t_k^i) \right]^2
	$, which implies ${\underline{\varUpsilon}_i}\leq0$. This contradicts \eqref{4-2-2}. Consequently, $\lim_{k\rightarrow \infty}t_k^i = \infty$ is proven. 
	Assuming $\Delta_{k}^i \rightarrow 0$ and invoking \eqref{4-2-2}, one can verify that $ \underline{\varUpsilon}_i \leq0$, which contradicts the condition $\underline{\varUpsilon}_i>0$. $\Delta_{k}^i = t_{k+1}^i - t_{k}^i = \tau>0$ therefore holds.
	This completes the proof of Theorem \ref{thm2}. 
\end{proof}


\section{Numerical results}\label{simu}

In this section, a numerical example is presented to illustrate the feasibility and effectiveness of the proposed mechanism. 
We consider a multi-agent system with 1 leader and 6 agents. To verify that it is fully distributed without requiring the spectra of Laplacian matrices, we use the following two graphs whose eigenmatrices are $\mathcal{G}_1 $  and $\mathcal{G}_2 $, which are given by
$$\mathcal{G}_1  = \begin{pmatrix}6&-1&-2&-1&-2&0\\-1&8&-3&0&0&-4\\-2&-3&5&0&0&0\\-1&0&0&4&-3&0\\-2&0&0&-3&6&-1\\0&-4&0&0&-1&5 \end{pmatrix},
\mathcal{G}_2 =  \begin{pmatrix}2&-1&-1&0 &0&0\\-1&5&-3&-1&0&0\\-1&-3&6&-1&-1&0\\0&-1&-1&3&0&-1\\0&0&-1&0&5&-3\\0&0&0&-1&-3&4\end{pmatrix}.$$
Accordingly, leader weight matrices are set to be $diag\{2,0,0,0,0\}$ and $diag\{0,0,5$ ,$0,0\}$.
The nonlinear dynamics of agents is the pendulum model which is given by $f(t,x_i,v_i) = -\frac{g}{l}sin(x_i) - \frac{k}{m} v_i $, where $g,k,l,m$ are the gravitational acceleration, the coefficient parameters, the length and the mass of the rob, respectively. It is easy to verify that such a nonlinear dynamic model satisfies assumption \ref{ASS1}. Here, we take $g=9.8, k=0.1, l= 4, m=1$ here. To find a group of feasible parameters satisfying $\Pi<0$ in Theorem.\ref{thm1}, one can use LMI toolbox in MATLAB. Here, we present a group of the parameters for the Lyapunov function candidate \eqref{LF}: $\mu = 2, \varpi = 1.5, \eta  = 20.5, \omega = 40, \Delta_D = 0.5I_6,\varsigma = -0.5, \rho = -2$ and $\hat C = \frac{1}{20} H^{-1}$. For the parameters in consensus control protocol \eqref{protocol} and the adaptive law of the thresholds \eqref{dotd}, in the simulation, their values are taken as $\alpha = 1, \beta = 30, \gamma = 35, \delta = 13.67, \xi = \left[\xi_1,\ldots,\xi_6 \right]  = \left[ 0.5,0.2,0.4,0.3,0.5,0.6\right] $. Besides, the initial positions and velocities of the leader and followers are randomly generated between $\left[ -1,1 \right]$. 
From Fig.\ref{Positions}, it can be observed that all the follower agents can track the position of the leader under both graph $\mathcal{G}_1$ and graph $\mathcal{G}_2$ without retuning the parameters. Also, Fig. \ref{Velocities} shows that the tracking performance of velocities is also guaranteed. 
In Fig.\ref{3D}, the tracking errors of positions and velocities of 6 agents with graph $\mathcal{G}_1$ are presented and it demonstrates the second-order consensus performance of the proposed method. 

Under $\mathcal{G}_1$, the states of adaptive protocol coupling gains are presented in Fig.\ref{C-Gains}, where the distributed control gains $c_i$ adaptively converge to proper ones. Fig.\ref{Positions} and Fig.\ref{Velocities} demonstrates that the second-order leader-following consensus can be achieved with the proposed network protocol in this paper.
\begin{figure*}[htbp]
	\centering
	\subfloat[Positions of the leader and followers under $\mathcal{G}_1$]{\includegraphics[width = 0.5\linewidth]{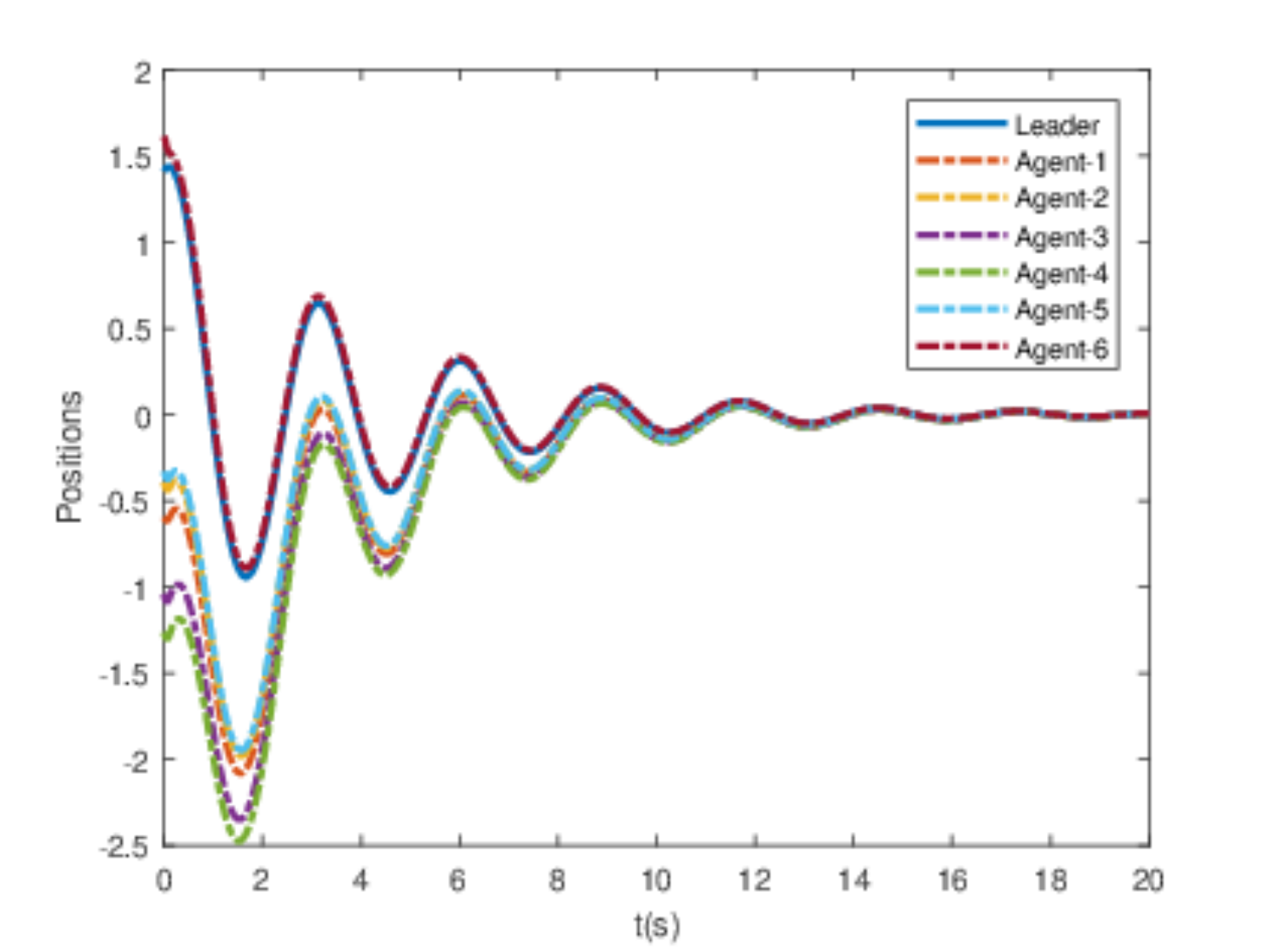}}
	\centering
	\subfloat[Positions of the leader and followers under $\mathcal{G}_2$]{\includegraphics[width = 0.5\linewidth]{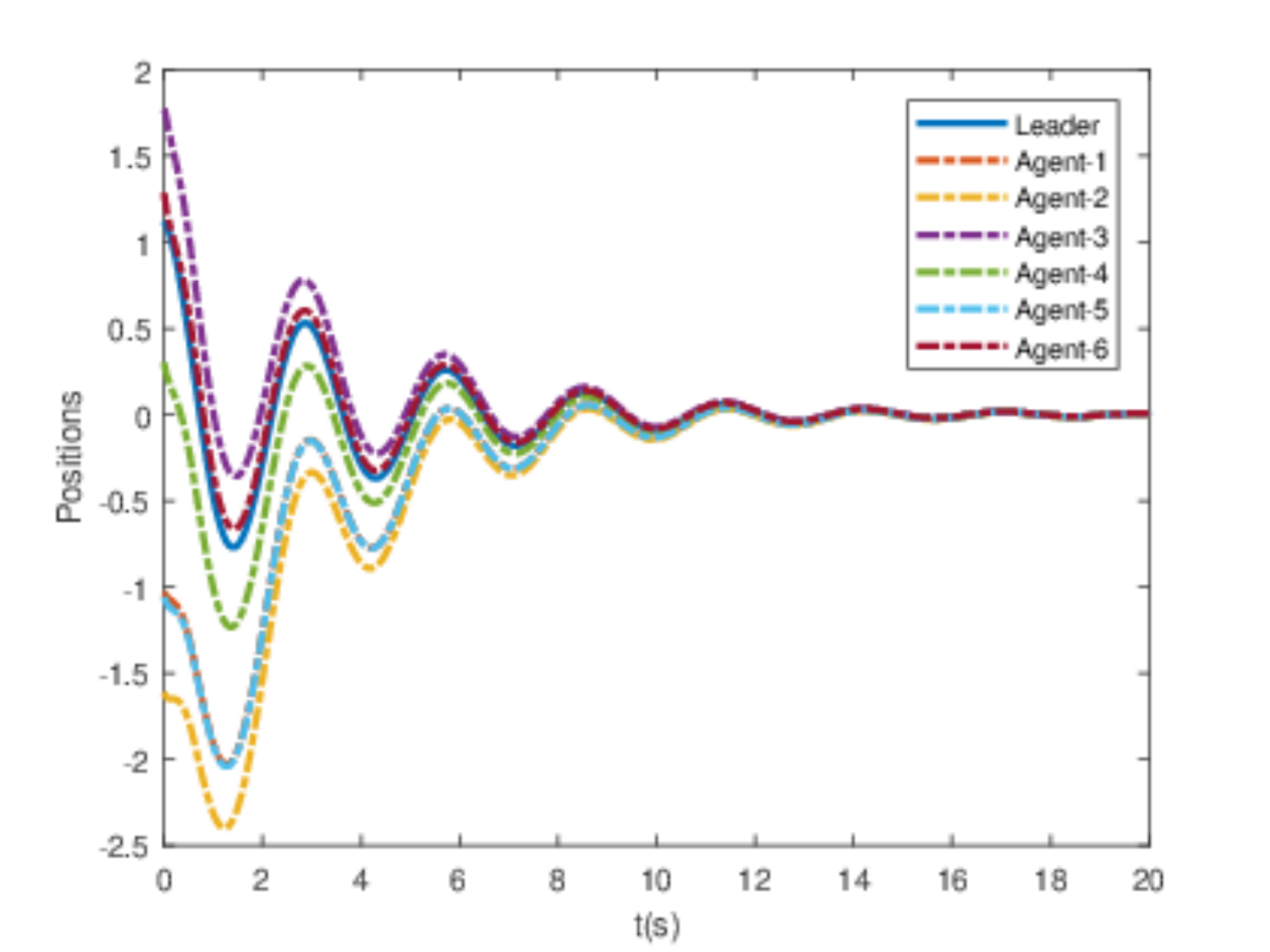}}
	\caption{Consensus of positions under different topologies}\label{Positions}
\end{figure*}
\begin{figure*}[htbp]
	\centering
	\subfloat[Velocites of the leader and followers under $\mathcal{G}_1$]{\includegraphics[width = 0.5\linewidth]{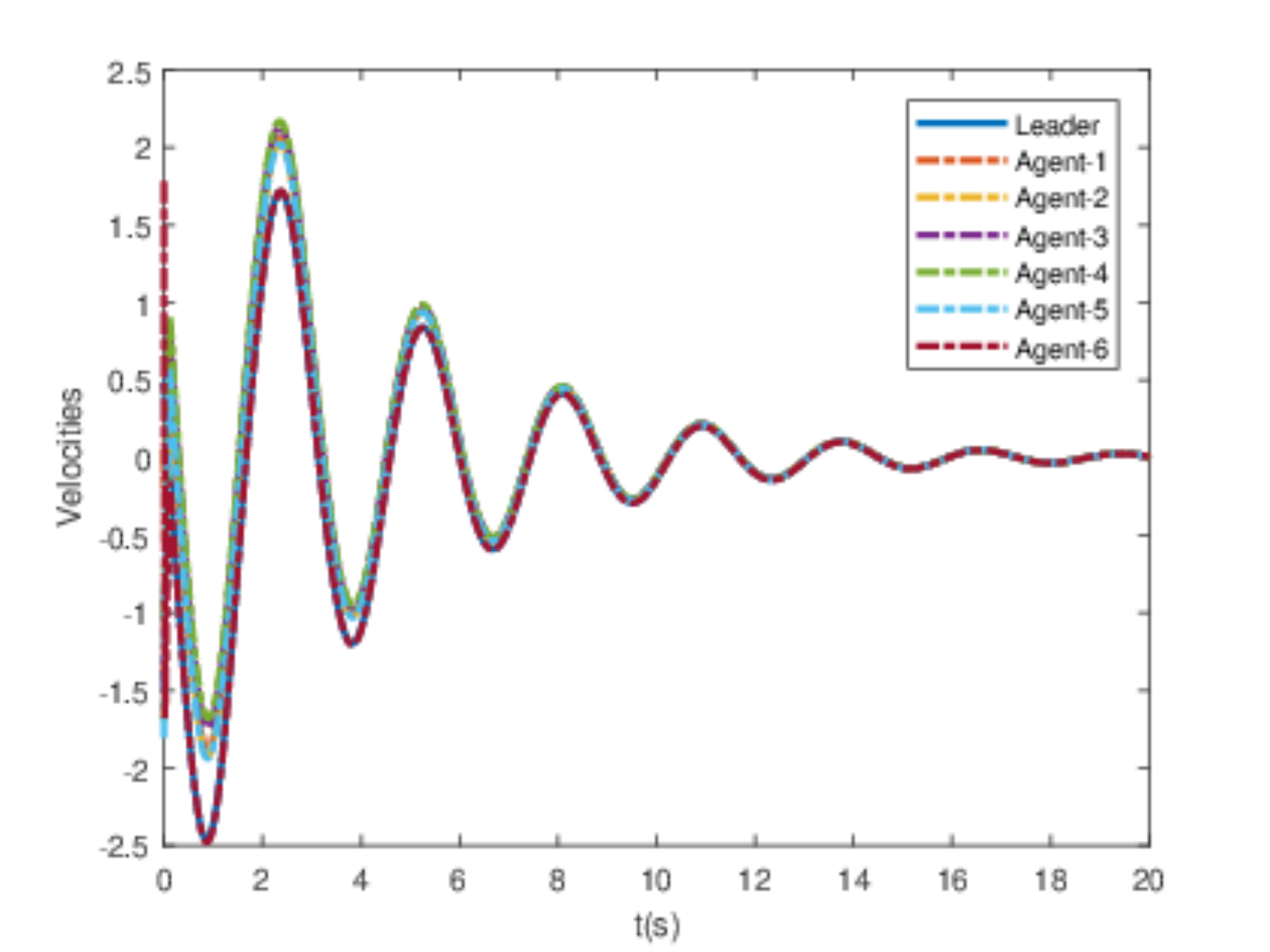}}
	\centering
	\subfloat[Velocites of the leader and followers under $\mathcal{G}_2$]{\includegraphics[width = 0.5\linewidth]{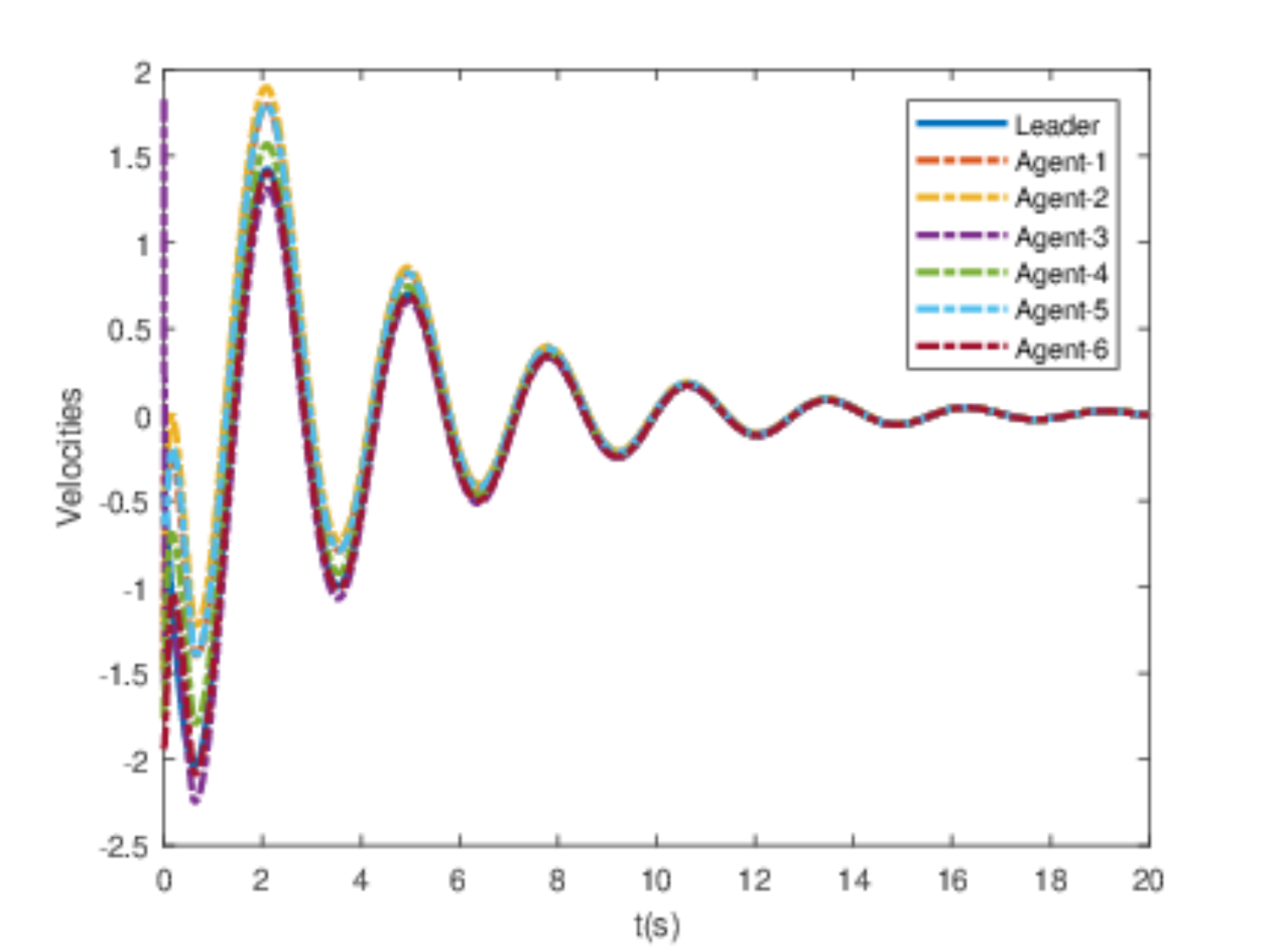}}
	\caption{Consensus of velocities under different topologies}\label{Velocities}
\end{figure*}
\begin{figure}[!ht]
	\hfil
	\subfloat{\includegraphics[width=3in]{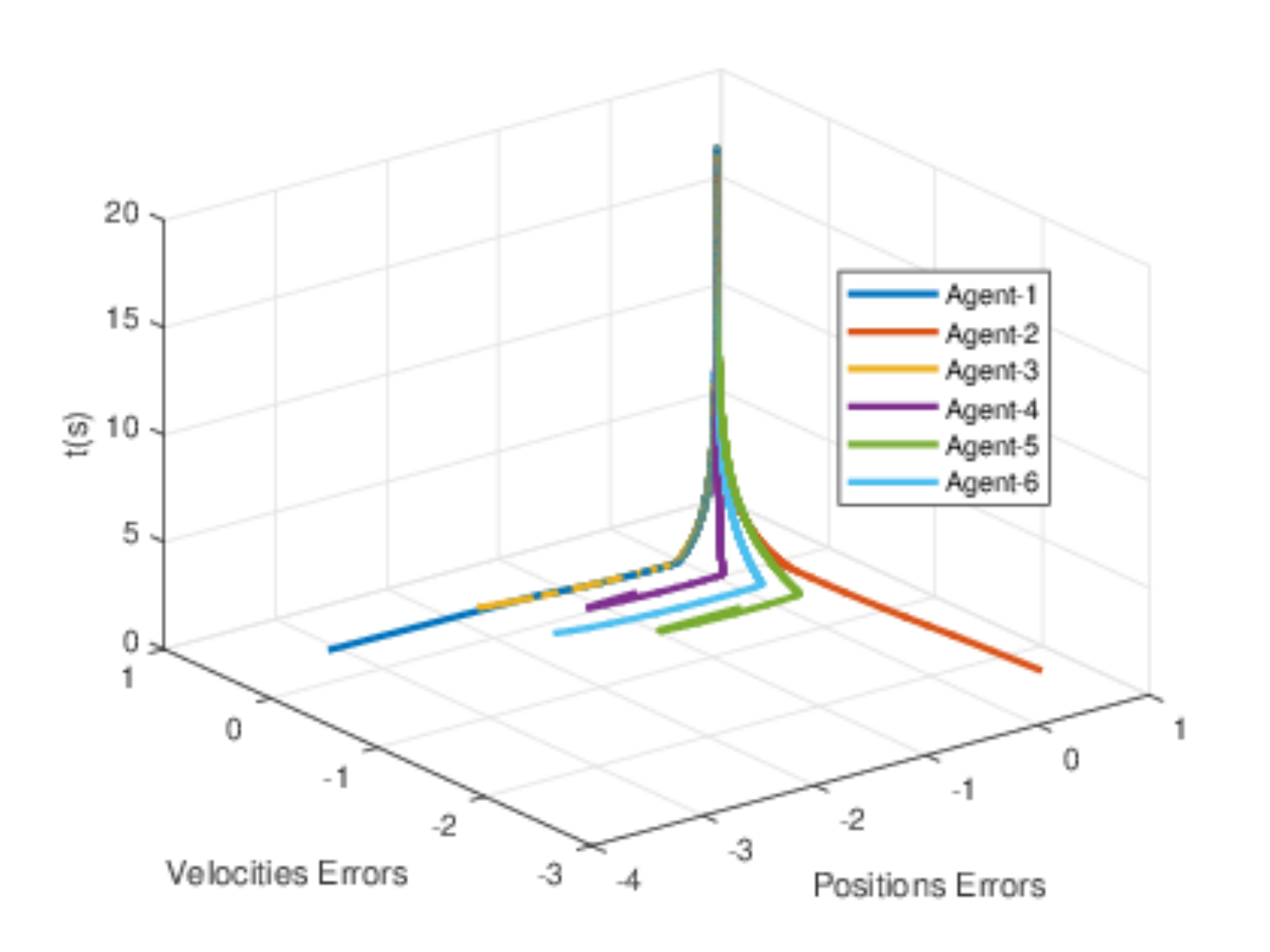}}%
	\caption{The second-order consensus under the proposed control protocol}
	\label{3D}
\end{figure}
\begin{figure}[!ht]
	\hfil
	\subfloat{\includegraphics[width=4in]{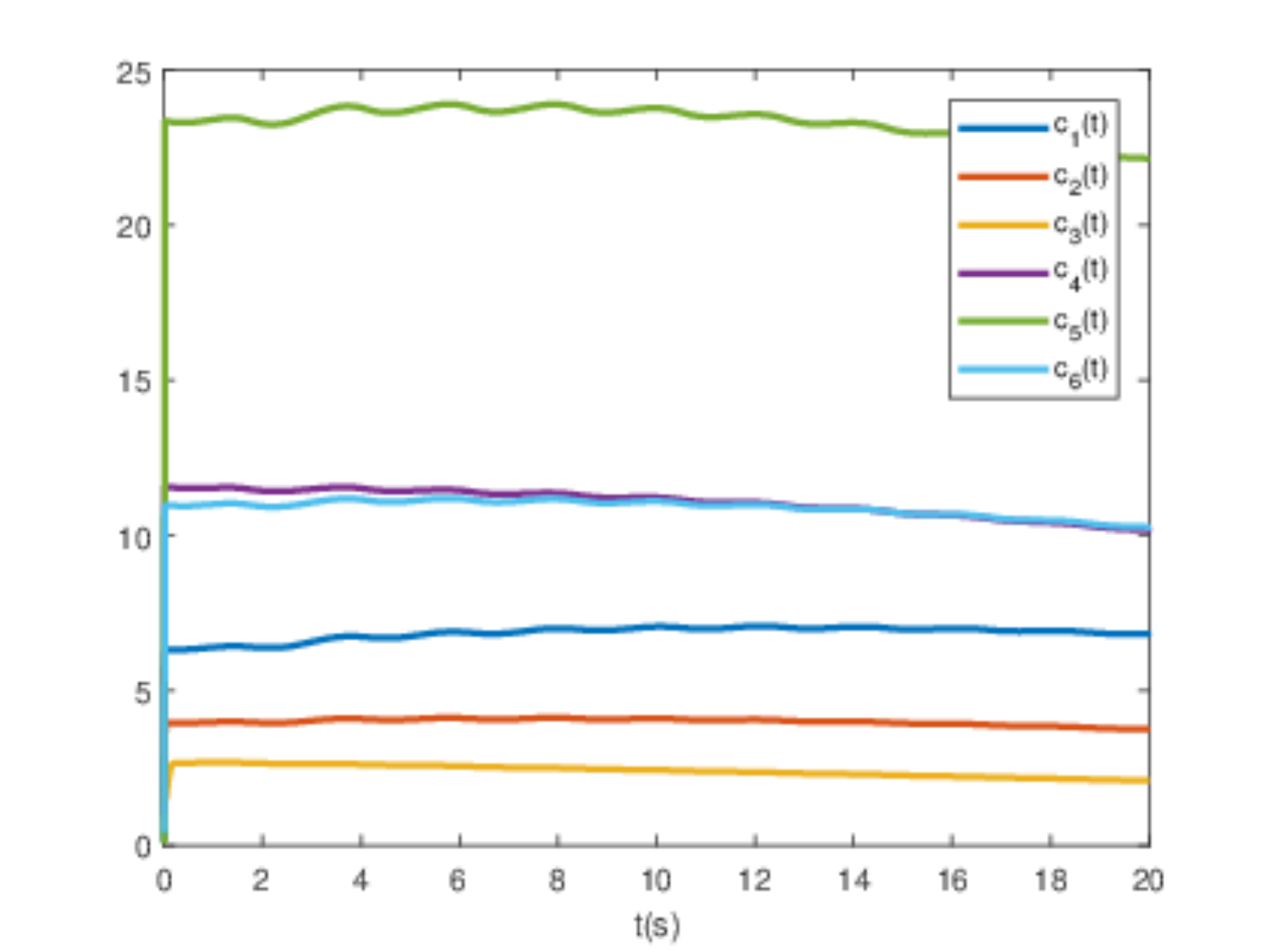}}%
	\caption{Consensus protocol coupling gains $c_i(t)$ under $\mathcal{G}_1$ }
	\label{C-Gains}
\end{figure}
\begin{figure}[!ht]
	\hfil
	\subfloat{\includegraphics[width=4in]{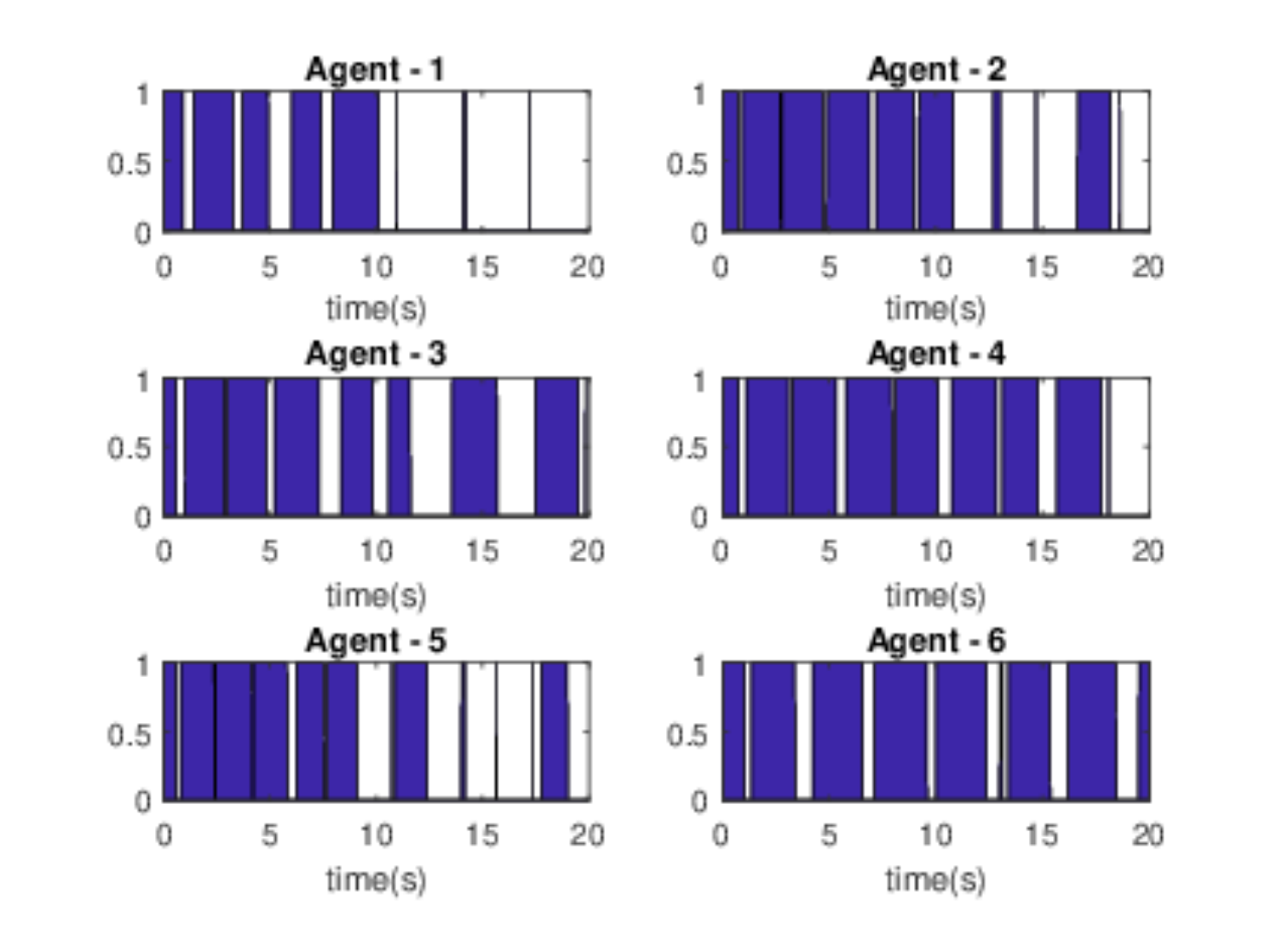}}%
	\caption{The state of triggered events of broadcasting signals under $\mathcal{G}_1$}
	\label{Comm}
\end{figure}

To show the effectiveness of ETM on reducing the frequency of interagent exchanges, Fig.\ref{Comm} presents the states of the events that each agent broadcasts its state to others under topology graph $\mathcal{G}_1$, where the blue areas represent that the predefined events are triggered.

For comparison, we also conduct the simulation for the ETM with the constant event-triggered thresholds in \cite{Zhao2017Event} (see Eq.(7)). In this work, the number of broadcasting interacting signals of each follower is negative related to the event-triggered thresholds parameters. By taking agent-$1$ as an example, the relationship between $\varrho$ and the number of the triggered events $R$ is presented in Fig.\ref{VAR-R} under $\mathcal{G}_1$ and $\mathcal{G}_2$, respectively. Note that, to facilitate analysis, we use a new defined parameter $\varrho\in \left[ 0,1\right] $ to replace  $\varrho_1,\varrho_2$ where $ \varrho_1 = 0.12 \varrho, \varrho_2=0.18 \varrho$ in this simulation. 
Also, the adaptive threshold $d_1(t)$ of agent-$1$ under $\mathcal{G}_1$ and $\mathcal{G}_2$ are accordingly given in Fig.\ref{thre-gains-in2}. 
The comparison of ETM proposed in this paper and its counterpart in \cite{Zhao2017Event} demonstrates that the adaptive triggering thresholds are free of using the spectra of Laplacian matrices, which verifies the effectiveness of the proposed control protocol.  

\begin{figure}[!ht]
	\hfil
	\subfloat{\includegraphics[width=3in]{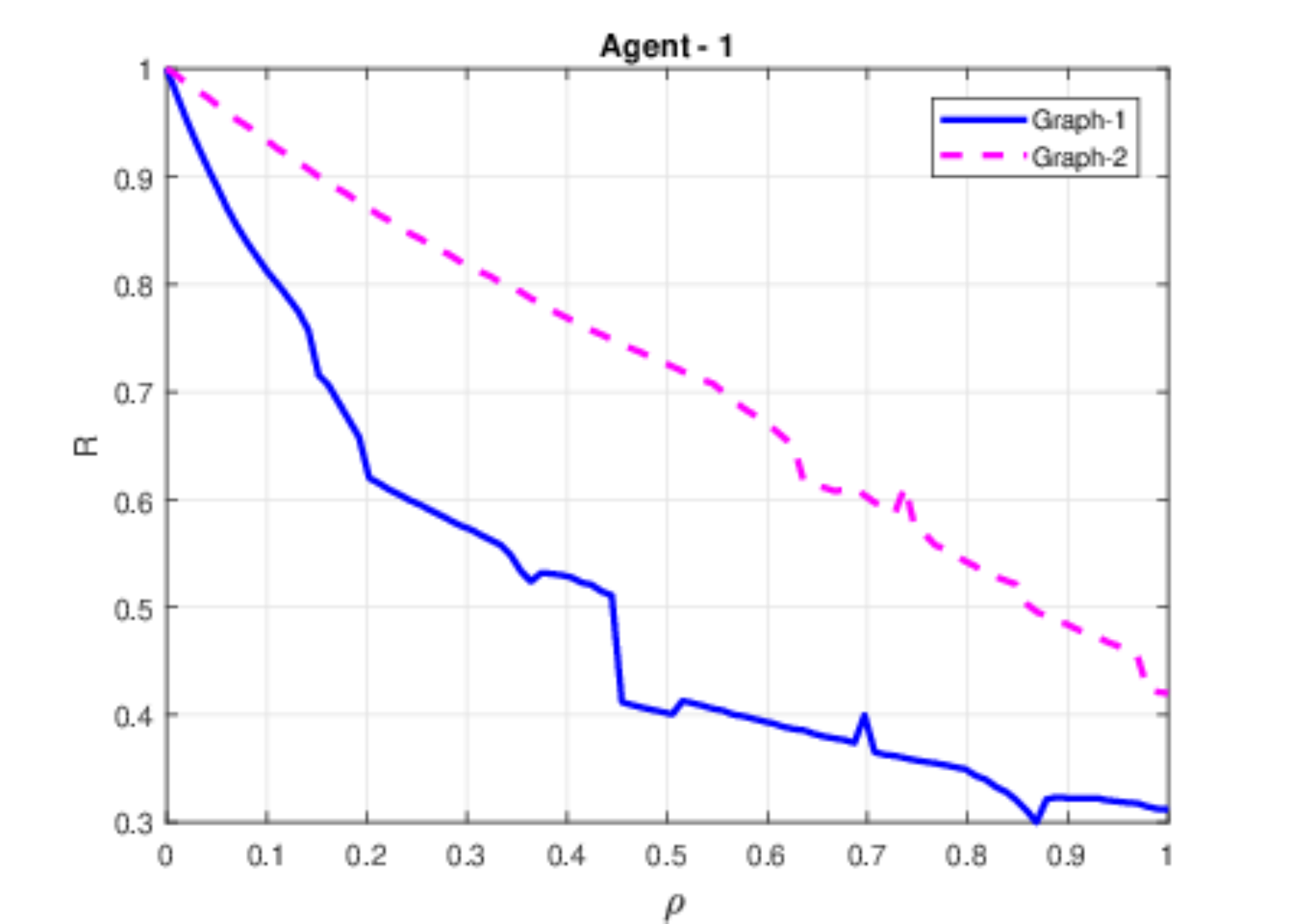}}%
	\caption{Ratio of broadcasting interacting signal and event-triggered criterion $\rho$ of the ETM in \cite{Zhao2017Event}}
	\label{VAR-R}
\end{figure}
\begin{figure}[!ht]
	\hfil
	\subfloat{\includegraphics[width=3in]{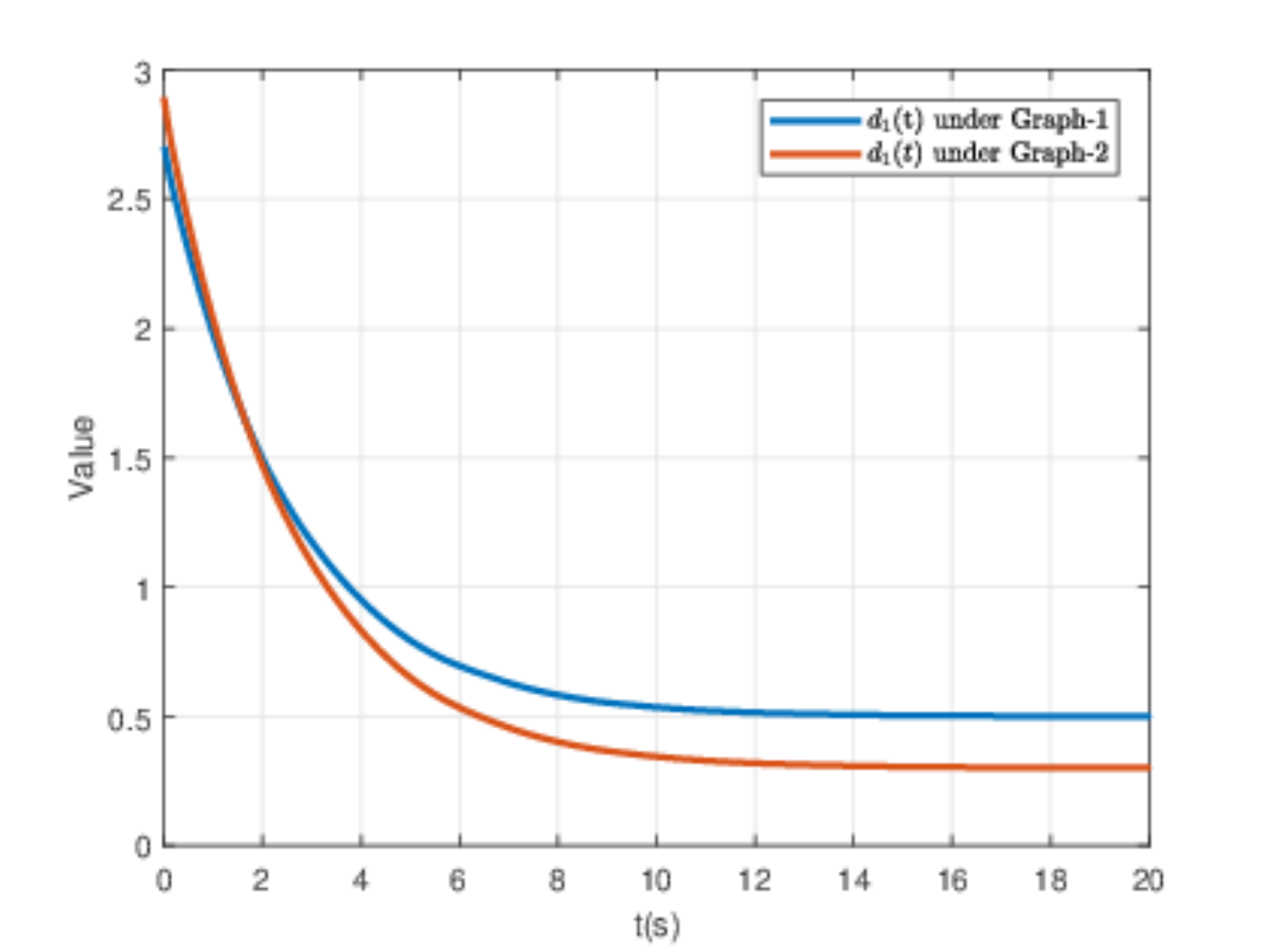}}%
	\caption{The time-varying threshold $d_1(t)$ of agent-1 under $\mathcal{G}_1$ and $\mathcal{G}_1$}
	\label{thre-gains-in2}
\end{figure}

\section{Conclusion}\label{con}

In this paper, we have proposed a novel event-triggered control protocol for leader-following consensus in second-order MASs under undirected topologies. 
To overcome the drawbacks of using continuous communicating signals among the follower agents, we have presented the event-triggered consensus protocol with an event-triggered mechanism. 
To get rid of the centralized information depending on the spectrum of the Laplacian matrix, we have proposed the adaptive laws to update the coupling gains and event-triggered thresholds resulting in that these parameters are free of the centralized information.
Moreover, considering that the velocity measurement process may be noise prone, we only use relative positions among agents in the protocol design. 
Compared with some existing results, the protocol in this paper has been less conservative and has excluded Zeno behavior. It has been found that consensus can be achieved under the distributed coupling gains and the distributed thresholds only if the undirected network is connected, which nevertheless is a simple and natural condition.

\section*{Acknowledgment}
This work was funded by National Science Foundation of China (No.61973040), China Postdoctoral Science Foundation (No.2020M680445), Postdoctoral Science Foundation of Beijing Academy of Agriculture and Forestry Sciences of China (No.2020-ZZ-001).

\bibliography{9th}

\begin{thebibliography}{10}
\expandafter\ifx\csname url\endcsname\relax
  \def\url#1{\texttt{#1}}\fi
\expandafter\ifx\csname urlprefix\endcsname\relax\def\urlprefix{URL }\fi
\expandafter\ifx\csname href\endcsname\relax
  \def\href#1#2{#2} \def\path#1{#1}\fi

\bibitem{shang2020resilient}
Y.~Shang, Resilient cluster consensus of multiagent systems, IEEE Transactions
  on Systems, Man, and Cybernetics: Systems 2020 (2020) 2168--2216.

\bibitem{huang2020distributed}
J.~Huang, W.~Wang, C.~Wen, J.~Zhou, G.~Li, Distributed adaptive
  leader--follower and leaderless consensus control of a class of
  strict-feedback nonlinear systems: a unified approach, Automatica 118 (2020)
  109021.

\bibitem{wang2020distributed}
D.~Wang, Y.~Huang, S.~Guo, W.~Wang, Distributed $H_{\infty}$ containment control of multiagent systems over switching topologies with communication time delay,
  International Journal of Robust and Nonlinear Control 30~(13) (2020)
  5221--5232.

\bibitem{Xie2015Event}
D.~Xie, S.~Xu, Y.~Chu, Y.~Zou, Event-triggered average consensus for
  multi-agent systems with nonlinear dynamics and switching topology, Journal
  of the Franklin Institute 352~(3) (2015) 1080--1098.

\bibitem{zhang2020semi}
M.~Zhang, A.~Saberi, A.~A. Stoorvogel, Semi-global state synchronization for
  discrete-time multi-agent systems subject to actuator saturation and unknown
  nonuniform input delay, European Journal of Control 54 (2020) 12--21.

\bibitem{yan2020performance}
G.~Yan, J.~Liu, Y.~Alipouri, B.~Huang, Performance assessment of distributed
  lqg control subject to communication delays, International Journal of Control
  (2020) 1--23.

\bibitem{Xian2015Event}
X.-M. Zhang, Q.-L. Han, Event-based $H_{\infty}$ filtering for sampled-data systems,
  Pergamon Press, Inc., 2015.

\bibitem{Zhang2016Survey}
X.~M. Zhang, Q.~L. Han, X.~Yu, Survey on recent advances in networked control
  systems, IEEE Transactions on Industrial Informatics 12~(5) (2016)
  1740--1752.

\bibitem{8959085}
T.~{Yu}, J.~{Xiong}, Distributed networked controller design for large-scale
  systems under round-robin communication protocol, IEEE Transactions on
  Control of Network Systems 7~(3) (2020) 1201--1211.

\bibitem{Hetel2017Recent}
L.~Hetel, C.~Fiter, H.~Omran, A.~Seuret, E.~Fridman, J.~P. Richard, S.~I.
  Niculescu, Recent developments on the stability of systems with aperiodic
  sampling: An overview , Automatica 76 (2017) 309--335.

\bibitem{Ellis1959Extension}
P.~Ellis, Extension of phase plane analysis to quantized systems, Ire
  Transactions on Automatic Control 4~(2) (1959) 43--54.

\bibitem{Tabuada2007Event}
P.~Tabuada, Event-triggered real-time scheduling of stabilizing control tasks,
  IEEE Transactions on Automatic Control 52~(9) (2007) 1680--1685.

\bibitem{Dimarogonas2012Distributed}
D.~V. Dimarogonas, E.~Frazzoli, K.~H. Johansson, Distributed event-triggered
  control for multi-agent systems, IEEE Transactions on Automatic Control
  57~(5) (2012) 1291--1297.

\bibitem{Seyboth2013Event}
G.~S. Seyboth, D.~V. Dimarogonas, K.~H. Johansson, Event-based broadcasting for
  multi-agent average consensus, Automatica 49~(1) (2013) 245--252.

\bibitem{Guinaldo2012Distributed}
M.~Guinaldo, D.~V. Dimarogonas, K.~H. Johansson, J.~Sánchez, S.~Dormido,
  Distributed event-based control for interconnected linear systems, in:
  Decision and Control and European Control Conference, 2012, pp. 2553--2558.

\bibitem{Fan2013Technical}
Y.~Fan, G.~Feng, Y.~Wang, C.~Song, Technical communique: Distributed
  event-triggered control of multi-agent systems with combinational
  measurements, Automatica 49~(2) (2013) 671--675.

\bibitem{Zhu2014Event}
W.~Zhu, Z.~P. Jiang, G.~Feng, Event-based consensus of multi-agent systems with
  general linear models, Pergamon Press, Inc., 2014.

\bibitem{Li2015Event}
H.~Li, X.~Liao, T.~Huang, W.~Zhu, Event-triggering sampling based
  leader-following consensus in second-order multi-agent systems, IEEE
  Transactions on Automatic Control 60~(7) (2015) 1998--2003.

\bibitem{Zhao2017Event}
M.~Zhao, C.~Peng, W.~He, Y.~Song, Event-triggered communication for
  leader-following consensus of second-order multiagent systems, IEEE
  Transactions on Cybernetics PP~(99) (2017) 1--10.

\bibitem{Guo2014A}
G.~Guo, L.~Ding, Q.~L. Han, A distributed event-triggered transmission strategy
  for sampled-data consensus of multi-agent systems , Automatica 50~(5)
  (2014) 1489--1496.

\bibitem{Yu2011Distributed}
W.~Yu, W.~Ren, W.~X. Zheng, G.~Chen, J.~Lü, Distributed control gains design
  for consensus in multi-agent systems with second-order nonlinear dynamics
  , IFAC Proceedings Volumes 44~(1) (2011) 1231--1236.

\bibitem{Zhongkui2013Distributed}
Z.~Li, W.~Ren, X.~Liu, L.~Xie, Distributed consensus of linear multi-agent
  systems with adaptive dynamic protocols , Automatica 49~(7) (2013)
  1986--1995.

\bibitem{Zhongkui2015Designing}
Z.~Li, G.~Wen, Z.~Duan, W.~Ren, Designing fully distributed consensus protocols
  for linear multi-agent systems with directed graphs, IEEE Transactions on
  Automatic Control 60~(4) (2015) 1152--1157.

\bibitem{yan2019event}
S.~Yan, M.~Shen, S.~K. Nguang, G.~Zhang, Event-triggered $ h_{\infty}$ control
  of networked control systems with distributed transmission delay, IEEE
  Transactions on Automatic Control 65~(10) (2019) 4295--4301.

\bibitem{Yuezu2017}
Y.~Lv, Z.~Li, Z.~Duan, G.~Feng, Novel distributed robust adaptive consensus
  protocols for linear multi-agent systems with directed graphs and external
  disturbances, International Journal of Control 90~(2) (2017) 137--147.

\bibitem{goebel2020unifying}
R.~Goebel, R.~G. Sanfelice, A unifying convex analysis and switching system
  approach to consensus with undirected communication graphs, Automatica 111
  (2020) 108598.

\bibitem{fu2019robust}
J.~Fu, Q.~Wang, J.~Wang, Robust finite-time consensus tracking for second-order
  multi-agent systems with input saturation under general directed
  communication graphs, International Journal of Control 92~(8) (2019)
  1785--1795.

\bibitem{Slotine2004Applied}
J.~J.~E. Slotine, W.~Li, Applied nonlinear control, China Machine Press,, 2004.

\bibitem{Chen2007Pinning}
T.~Chen, X.~Liu, W.~Lu, Pinning complex networks by a single controller, IEEE
  Transactions on Circuits \& Systems I Regular Papers 54~(6) (2007)
  1317--1326.

\end{thebibliography}
\end{document}